\newtheorem{theorem}{Theorem}
\newtheorem{corollary}{Corollary}
\newtheorem{example}{Example}
\newtheorem{remark}{Remark}
\DeclareMathAlphabet{\mathpzc}{OT1}{pzc}{m}{it}
\DeclareMathAlphabet{\mathcalligra}{T1}{calligra}{m}{n}
\title{Axioms for Constant Function Market Makers}
\author[1]{Jan Christoph Schlegel}
\author[2]{Mateusz Kwaśnicki}
\author[3]{Akaki Mamageishvili}
\affil[1]{Department of Economics, City, University of London, London, UK\footnote{Corresponding author: \small\tt jansc@alumni.ethz.ch}}
\affil[2]{Department of Pure Mathematics, Wroclaw University of Science and Technology, Wroclaw, Poland}
\affil[3]{Offchain Labs, Zurich, Switzerland}
\date{}
\begin{document}


\maketitle


\begin{abstract}
    We study axiomatic foundations for different classes of constant-function automated market makers (CFMMs). We focus particularly on separability and on different invariance properties under scaling. Our main results are an axiomatic characterization of a natural generalization of constant product market makers (CPMMs), popular in decentralized finance, on the one hand, and a characterization of the Logarithmic Scoring Rule Market Makers (LMSR), popular in prediction markets, on the other hand. The first class is characterized by the combination of independence and scale invariance, whereas the second is characterized by the combination of independence and translation invariance. The two classes are therefore distinguished by a different invariance property that is motivated by different interpretations of the numéraire in the two applications. 
    However, both are pinned down by the same separability property.
    Moreover, we characterize the CPMM as an extremal point within the class of scale invariant, independent, symmetric AMMs with non-concentrated liquidity provision. Our results add to a formal analysis of mechanisms that are currently used for decentralized exchanges and connect the most popular class of DeFi AMMs to the most popular class of prediction market AMMs.  
\end{abstract}

\section{Introduction}
One of the first and so far most successful applications of Decentralized Finance (DeFi), financial applications run on permissionless blockchains, are so-called Automated Market Makers (AMMs). They are used to trade cryptocurrencies algorithmically without relying on a custodian or trusted third party. The average daily volume traded on AMMs used in DeFi reached \$3.5B in 2022, while the total value locked in DeFi protocols at the time of writing this paper\footnote{January 20th, 2023, data obtained from defilama.com.} is estimated to be \$45B. The considerable volume of assets traded on AMMs contrasts with their simple economic design which is notably different from the design of traditional financial exchanges. The state of a typical AMM used in DeFi consists of the current inventories of the traded tokens. Trades are made such that some invariant of these inventory sizes is kept constant. Traders who want to exchange tokens of type $A$ for tokens of another type $B$, add $A$ tokens to the inventory and in return obtain an amount of $B$ tokens from the inventory so that the invariant is maintained. The simplicity of these so-called constant function market makers (CFMMs) stems from the limited storage and computation capabilities of smart contract blockchains. In comparison to more conventional market designs such as limit order books, updating a simple function and inventories is relatively inexpensive to implement and scalable on smart contract blockchains such as Ethereum. Moreover, AMMs allow market participants to provide liquidity passively without having to trade themselves. This makes liquidity provision in AMMs a popular way of getting (negative) exposure to volatility while being compensated by a steady stream of fee income.

While CFMMs proved to be very popular and reliable, the construction of invariants to define them seems in many ways ad-hoc and not based in much theory. In this paper, we fill this gap and propose an axiomatic approach to constructing CFMMs. The approach is, as in any axiomatic theory, to formalize simple principles that are implicitly or explicitly used when constructing trading functions in practice and to check which classes of functions satisfy these principles, beyond those functions already used in practice. The constant product market maker (CPMM), which makes the market in such a way that the product of inventories of the different traded assets is kept constant, has been particularly focal in DeFi applications. The CPMM was originally introduced by Uniswap, the first and so far most popular DeFi AMM by trading volume.\footnote{A natural generalization of the CPMM, the multi-dimensional weighted geometric mean is for example used by Balancer,~\cite[]{balancer2019}, the third largest DeFi AMM by trading volume. The CPMM was used by Uniswap before its update to Version 3 in 2021 and is still in use in many other AMMs. The CPMM was replaced in V3 by a new design~\cite[]{uniswap} that abandons fungibility of liquidity to allow for customized liquidity positions that can be chosen by individual liquidity providers. Fungibility of liquidity positions is captured by the scale invariance axioms that we discuss subsequently. The second largest DeFi AMM by trading volume, the Curve protocol~\cite[]{egorov2019}, is specifically designed to trade highly correlated assets (such as stable coins or different representations of the same cryptocurrency).  Our second main axiom, independence, is violated by the curve AMM. As we will argue, our main characterization implies separability of the market-making function, which is mainly appealing for situations where the price correlation between the traded tokens is imperfect. However, for the two-dimensional case, Theorem~\ref{thm:bijection} characterizes a broad class (including the two-dimensional version of the curve formula) of generally non-separable CFMMs.}
Our first main result (Corollary~\ref{main_result}) characterizes a natural superclass of the (multi-dimensional) CPMM by three natural axioms. The CPMM rule is characterized by being trader optimal within this class (Theorem~\ref{opt:product}). 

In contrast to the case of AMMs for DeFi, there is a much more developed theory for AMMs for prediction markets (see the literature review, Section~\ref{litreview}), for which AMMs have originally been introduced~\cite[]{hanson2003}. There are notable differences in the way that AMMs for prediction markets and those for DeFi function in practice, which has to do with the different nature of the assets traded in them: In a prediction market, the assets are artificially created Arrow-Debreu securities that allow traders to bet on an outcome. Moreover, traders exchange assets against a numéraire rather than swapping between assets.  However, there are also notable similarities that allow for a common analysis of the two applications, as we discuss below.
Similarly as for the DeFi context, for prediction markets, one rule has become focal in theory and practice. This is Hanson's logarithmic scoring rule market maker (LMSR)~\cite{prediction_markets} which maintains an invariant that is the logarithm of a sum of exponentials of the inventories.\footnote{In its original formulation, the LMSR is defined by a logarithmic scoring rule. AMMs defined by scoring rules can be equivalently described by a cost function~\cite[]{Chen2007} which can be shown to take the form of a sum of exponentials for the logarithmic scoring rule. See Footnote~\ref{footnote} for the equivalence between the two definitions. Describing AMMs in terms of cost of trading and in terms of the value of the inventory is equivalent up to change in sign as we discuss below.} Our second main result characterizes the LMSR by three natural axioms (Corollary~\ref{cor:trans}). To the best of our knowledge, this is the first axiomatic characterization of this rule. Our results give a possible normative justification for the use of the CPMM and of the LMSR rule. Moreover, they highlight a hidden commonality between the two rules; the main axiom driving both characterization results is the same independence axiom.

Independence requires that the terms of trade for trading a subset of assets should not depend on the inventory levels of not-traded assets. In the case of smooth liquidity curves, this is equivalent to requiring that the exchange rates for traded asset pairs do not depend on the inventory levels of assets not involved in the trade. 
If AMMs are defined through cost functions, as is customary in the prediction market context, independence (equivalently) means the following: suppose a trader trades in a subset of assets. Then whether two trades involving these assets cost the same should be independent of previous trades made in other assets. Our characterization results combine independence with one of two invariance properties that derive their normative appeal through the different role the numéraire plays in DeFi and in prediction markets.

In the DeFi application, the numéraire is an "LP token" which is a derivative product that is a claim to a fraction of the pooled liquidity and the accrued trading fees of the AMM. A desirable property of these liquidity positions is that they are fungible. Practically, the fungibility of liquidity positions allows tokenizing them in order to use them in other applications, for example as collateral or to combine them with other assets to new financial products. This is an instance of what is usually called the "composability" property of DeFi protocols. Fungibility of liquidity positions requires scale invariance, or, in a stronger form, homogeneity of the trading function.  Geometrically, scale invariance means that liquidity curves through different liquidity levels can be obtained from each other, by projection along rays through the origin, analogous to how homothetic preferences in consumer theory can be constructed. Scale invariance of the inventory has been assumed or discussed in previous work, usually in the stronger form of (positive) homogeneity~\citep{capponi2021} or (positive) linear homogeneity~\citep{Othman2021,Angeris2022abs} and most popular AMMs used in practice, including the CPMM, satisfy it.

In the prediction market application, the numéraire is external to the AMM (usually it is a regular currency such as US dollars). Traders exchange assets against the numéraire rather than swapping between assets. The counterparty at settlement is the prediction market organizer who sells the securities to traders and therefore is the sole liquidity provider. Assets in a prediction market are Arrow-Debreu securities and the market is sufficiently complete so that traders can hold a risk-less portfolio consisting of one unit of each security so that the portfolio always pays out one unit of numéraire in every state of the world. As a consequence of this, AMMs for a prediction market
usually satisfy a different invariance property: {\it translation invariance}, requiring that making a risk-less trade consisting of the same amount of each of the assets always costs the same independently of the state of the AMM. In the case of a smooth trading function, this equivalently means that marginal prices of the traded assets measured in the numéraire always add up to one (see e.g.~\citet{Chen2007}). Translation invariance of AMMs is usually assumed in the literature on AMMs for prediction markets, see Section~\ref{litreview}, and AMMs used in practice, including the LMSR, satisfy it.

The combination of scale invariance or translation invariance and independence imply that the terms of trade are fully determined by the inventory ratio resp. the inventory difference of the pair traded. Under scale invariance, at the margin, percentage changes in exchange rates are proportional to percentage changes in inventory ratio. Under translation invariance, at the margin, percentage changes in exchange rates are proportional to changes in the difference of the two inventories. Therefore, combining independence and scale invariance, we obtain the class of constant inventory elasticity\footnote{The inventory elasticity of an AMM is a measure of how marginal prices change with the ratio of the inventories of the traded assets. If the inventory elasticity is $\epsilon$ when the AMM holds $x$ units of asset $A$ and $y$ units of asset $B$, then changing  the inventory ratio $y/x$ by $1\%$ corresponds to changing the exchange rate of $A$ and $B$ by $1/\epsilon\%$. For example, for the constant product rule, the inventory elasticity is $1$ since the exchange rate is simply given by the inventory ratio $y/x.$ } market makers (CEMMs) (Theorem~\ref{main_result}). This general class contains as special cases constant products, weighted geometric means as well as weighted means. On the other hand, by combining independence and translation invariance, we obtain LMSR rules or constant sum market makers (Theorem~\ref{secondmain}).
Both classes also contain AMMs with non-convex liquidity curves\footnote{For CEMMs, the non-convex case corresponds to negative inventory elasticity, where increases in inventory ratio correspond to decreases in exchange rates. For LMSRs the non-convex case corresponds to a negative liquidity parameter $b$.} which are usually not appealing in practice. 
 We can, however, combine the two axioms in either characterization with other axioms to obtain  characterization results for convex liquidity curves: If, in addition to scale invariance and independence, we impose that liquidity curves are convex, then the elasticity in the above characterization is positive;  alternatively, if we require un-concentrated liquidity (liquidity curves should not intersect with the axes) then the elasticity in the above characterization is positive but smaller or equal to $1$ (Corollary~\ref{cor:aversion}). Similarly, if, in addition to translation invariance and independence, we impose convexity of liquidity curves,\footnote{Convexity of liquidity curves follows from concavity of the trading function which is equivalent to the convexity of the cost function which is a standard assumption in the prediction market literature.} then the liquidity parameter $b$ in the LMSR formula is non-negative (where the constant sum market maker corresponds to the limit of the LMSR as $b$ approaches infinity, Corollary~\ref{cor:trans}).

 Finally, if we further require symmetry in market making, the AMMs in the class of scale invariant, independent AMMs with un-concentrated liquidity can be ranked by the curvature of their liquidity curves which determines how favorably the terms of trade are from the point of view of traders; the CPMM is characterized by being trader optimal within this class (Theorem~\ref{opt:product}).


The above characterizations are obtained for the case of more than two assets traded in the AMM. For the case of exactly two assets, the independence axiom is trivially satisfied and we generally obtain a much larger class of trading functions satisfying the above axioms (Theorem~\ref{thm:bijection}). The class can no longer be completely ranked by the convexity of the induced liquidity curves. However, if we focus on separable CFMMs we obtain the same kind of characterizations (Theorems~\ref{elasticity} and~\ref{elasticity2} and Corollaries~\ref{cor:3} and~\ref{cor:4}) as in the multi-dimensional case, as well as the same kind of optimality result for the CPMM (Theorem~\ref{opt:product2}). In the two-dimensional case, separability of the trading function is a consequence of an additivity property for liquidity provision that we call Liquidity Additivity.

The axiomatic approach leads us to considerations and classes of functions familiar from other fields in economics, consumer theory, and production theory in particular where constant elasticity functions play an important role.
 A main technical contribution of the present paper is to derive the constant elasticity functional form on the one hand and the LogSumExp form on the other hand from two natural properties, without relying on any differentiability or on convexity assumptions.

The paper is organized in the following way. In the next subsection, we briefly review the literature. In Section~\ref{model_preliminaries} we introduce notation and various axioms. In Section~\ref{multiple}, we provide characterization results for the case of more than two assets. In Section~\ref{results}, we provide results for the case of  two assets. In Appendix~\ref{LogicalIndependence}, we show that the axioms used in the various characterizations are logically independent. In Appendix~\ref{Discontinuous} we show how our result fails to hold for discontinuous liquidity curves.

\subsection{Related Work}\label{litreview}
Automated market makers have first been analyzed scientifically in the context of prediction markets~\cite[]{hanson2003,prediction_markets,AMM_prediction, Chen2007,chen2008,Abernethy2011,Ostrovsky2012,othman2013}. The original formulation of AMMs uses scoring rules~\cite[]{hanson2003}. Traders who interact with the AMM obtain a net pay-off after settlement that is determined by a scoring rule that measures the accuracy of the trader's prediction in comparison to the realized state. In the case of the LMSR the score is logarithmic. \citet{Chen2007} observe that AMMs can alternatively be defined in terms of cost functions. In this paper, we work with trading functions that are equivalent to cost functions (see below).
Translation invariance is generally necessary for the equivalence between scoring rule AMMs and cost function AMMs~\citep{Chen2007}. However, in subsequent work to ours, \cite{Bo} show that the (not translation invariant) CPMM in two dimensions can be described by a family of scoring rules where each scoring rule describes trading along a fixed liquidity curve. The construction generalizes to other CFMMs and scale invariance of the CFMM guarantees that scoring rules for different liquidity levels can be obtained by scaling with a linear factor (for the CPMM the scoring rules are given by $S_A(p)=-k\sqrt{p_A/p_B}$ where $k$ depends on the liquidity level). Thus, non-translation invariant CFMMs can be described by families of scoring rules rather than a single scoring rule.
\cite{othman2011liquidity} observe that prediction market AMMs satisfying translation invariance are also mathematically equivalent to risk measures from finance~\cite[]{delbaen} which gives another representation of them. \cite{Ostrovsky2012} analyzes information aggregation in markets made according to a scoring rule AMM. Information is successfully aggregated if the assets satisfy a separability property.~\cite{Spyros} extends the analysis to ambiguity averse traders.

Similarly, as in the prediction market use case, DeFi AMMs can be used to aggregate information into a price signal. In DeFi, "price oracles"  provide exchange rates between tokens to other blockchain applications, such as lending protocols, eliminating the need to obtain price information from an off-chain source.
In~\citet{CFMM} the authors study CFMMs as a device to create a price oracle.~\citet{composability} study axiomatization of composable AMMs. AMMs allowing exchanges of more than two assets arise naturally by merging different AMMs exchanging only two assets. That is, a trader can split his trade across different AMMs. The problem of optimizing the order flow over different AMMs has been studied in~\citet{ChitraEC}.
Several papers consider trading and liquidity provision behavior for CPMMs.~\citet{Park2021} argues that CPMMs are susceptible to arbitrage opportunities in the presence of multiple competing exchanges. To fix the potential drain of resources from the AMM,~\citet{dynamic_curves} proposes dynamic curves that use an additional input from a market price oracle to dynamically adjust the liquidity curve. \citet{aoyagi2020} determines equilibrium liquidity and expected profits for liquidity providers for CPMMs. \citet{Lehar2021} provide an empirical study of liquidity provision and pricing in a decentralized exchange (Uniswap) in comparison to a centralized exchange (Binance). \citet{capponi2021} study the incentives of liquidity providers and investors in AMMs in the presence of arbitrage. They also link it to design, the curvature of the trading function, and the number of pooled assets. In independent work from ours (the paper was released shortly after our original working paper),~\citet{bichuch2022} study several properties of two-dimensional AMMs, check whether popular examples satisfy these properties, and study fees and impermanent loss.
Our paper also relates to the earlier literature in economics on CES functions in utility and production theory. In particular,~\citet{RADER1981} shows in the context of consumption choices over time that separable, homothetic, and quasi-concave utility functions are of the CES type. Our first theorem proves a mathematically related result,  however, we do not rely on quasi-concavity for our proof.



\section{Model and Axioms}\label{model_preliminaries}
A constant function market maker (CFMM) for a finite set of assets $\mathcal{A}=\{A,B,\ldots\}$ maintains an {\bf inventory} $I\in\mathbb{R}_+^{\mathcal{A}}$ of assets and makes a market according to a {\bf trading function}
 $f:\mathbb{R}^{\mathcal{A}}_+\rightarrow\mathbb{R}$ where $f(I)$ is the value of the inventory measured in a numéraire. We make two assumptions throughout: The trading function is {\bf strictly increasing} in all variables for strictly positive inventories and {\bf continuous} in the sense that for each $J\in\mathbb{R}_+^{\mathcal{A}}$ the sets $\{I:f(I)\geq f(J)\}$ and $\{I:f(I)\leq f(J)\}$ are closed (some of our results only require continuity on the boundary, see Appendix~\ref{Discontinuous}). Traders interact with the AMM by making {\bf trades} $r\in\mathbb{R}^{\mathcal{A}}$ such that $I\geq r$. The new inventory of the AMM after trading is $I-r$ and the trader needs to provide $f(I)-f(I-r)$ units of numéraire to the market maker to execute the trade.

\subsubsection*{Decentralized Exchanges}
In decentralized exchanges, the inventory itself is usually tokenized and the unit of these "liquidity provider (LP) tokens" can be used as numéraire. Moreover, trading is usually restricted to take one of two forms  
\begin{enumerate}
    \item swapping assets without involving the numéraire. Without transaction fees, this corresponds to trades $r\in\mathbb{R}^{\mathcal{A}}$ with $$f(I-r)=f(I).$$ The set of all inventory levels $I-r$ for trades $r$ satisfying the previous equation is called a {\bf liquidity curve} (or surface). 
    \item adding or removing liquidity in exchange for the numéraire. Without transaction fees, this corresponds to trades $r$ that are multiples of the current inventory $$r=\lambda I\quad\text{ for }\lambda\geq-1,$$
    where the trader obtains $f(I-r)-f(I)$ units of the numéraire if he provides liquidity ($\lambda<0$) and pays $f(I)-f(I-r)$ units of the numéraire if he removes liquidity ($\lambda>0$).
\end{enumerate}
 Without transaction fees and assuming that the trader is endowed with enough numéraire good (or can borrow it), the restriction to  two types of trades is without loss of generality: Any trade $r\in\mathbb{R}^{\mathcal{A}}$ such that $I\geq r$ can be realized by a combination of swaps, liquidity addition, and removal. In particular, a trader can exchange any asset$A\in \mathcal{A}$ for the numéraire by first swapping some amount of asset $A$ other assets and afterward providing liquidity.
\subsubsection*{Prediction markets}
In the prediction market literature~\citep{Abernethy2011}, AMMs are often defined in terms of an (increasing, continuous) {\bf cost function} $C:\mathbb{R}^{\mathcal{A}}\rightarrow \mathbb{R}$.
Here $C(r)$ is the cost measured in the numéraire and incurred by traders from executing trade $r\in\mathbb{R}^{\mathcal{A}}$. In contrast to the decentralized exchange application, trading in prediction markets is usually assumed to involve the numéraire whereas liquidity is provided by a single market operator.
However, the cost function formulation is equivalent to the trading function formulation as can be seen as follows: Suppose that the initial inventory of the AMM before trading is $I^0$. Then we can define a cost function by $$C(r)=-f(I^0-r).$$ The cost of executing a trade $r$  is equal to the loss of value of inventory if the inventory changes from $I^0$ to $I^0-r$, which is $f(I^0-r)-f(I^0)$. More generally, for a sequence of trades $r^1,\ldots, r^T$ so far realized, the cost of making a new trade $r$ is $C(\sum_{t=1}^T r^t+r)-C(\sum_{t=0}^Tr^t)$ which is equal to the change in inventory $f(I^T-r)-f(I^T)$ where $I^T=I^0-\sum_{t=1}^Tr^t$.

\begin{remark}
Our model implicitly assumes path independence of the AMM. This refers to the assumption that the trading function is a function only of the current inventory, but not of the past history of trades. See~\citet{Abernethy2011} for a general model of path-dependent AMMs and a formal statement of the path independence axiom that we implicitly assume by focusing on trading functions that are a function of current inventories alone. CFMMs used in practice usually satisfy path independence if we abstract away from transaction fees.
\end{remark}

We introduce the following equivalence notion for CFMMs: two trading functions $f,g$ are {\bf equivalent} (for market making) if for all inventories $I,J$ we have $$f(I)=f(J)\Leftrightarrow g(I)=g(J).$$
Two equivalent CFMMs will always make the market in the same way in the sense that they induce the same family of liquidity curves and therefore offer the same terms of trade at the same inventory levels. However, they can differ in the cardinal measure of liquidity, e.g. by using different numéraires.
Since trading functions $f,g$ are increasing in all arguments, $f$ and $g$ are equivalent if and only if there is a strictly increasing function $M$ on the range of $f$ such that $M(f(I))=g(I)$ for each $I\in\mathbb{R}_+^{\mathcal{A}}$.

Popular examples of trading functions used in practice are:
\begin{itemize}
\item {\bf weighted geometric means}~\cite[]{balancer2019} where $$f_{\text{product}}(I)=\prod_{A\in\mathcal{A}}I^{\alpha_A}_A,$$
which in the symmetric case where $\alpha_A=\alpha_B$ for $A,B\in\mathcal{A}$ is the popular {\bf constant product market maker (CPMM)},
\item {\bf constant sum market makers} $$f_{\text{mean}}(I)=\sum_{A\in\mathcal{A}}c_AI_A,$$
\item {\bf Uniswap V3 market makers} ~\cite[]{uniswap} defined for the case of two assets $$f_{\text{V3}}(I_A,I_B)=\sqrt{(I_A+\alpha)(I_B+\beta)},$$
for $\alpha,\beta\geq 0.$
\item {\bf Curve market makers}~\cite[]{egorov2019} which are implicitly defined by
$$\alpha n^n\sum_{A\in\mathcal{A}}I_A+f_{curve}(I)=\alpha n^nf_{curve}(I)+\frac{f_{curve}(I)^{n+1}}{n^n\prod_{A\in\mathcal{A}}I_A},$$
where $n:=|\mathcal{A}|$, $\alpha>0$ is a constant, and $f_{curve}(I)$ corresponds to the unique positive real solution of the above equation for inventory $I\in\mathbb{R}^{\mathcal{A}}_+$ with $I>0$.\footnote{In the following we use the notation $I>0$ to denote that all inventories are positive, i.e. $I_A>0$ for each $A\in\mathcal{A}$.}
\item {\bf Logarithmic scoring rule (LMSR)} market makers~\cite[]{hanson2003,prediction_markets} $$f_{LMSR}(I)=-b\log\left(\sum_{A\in\mathcal{A}}e^{(c_A-I_A)/b}\right),$$
for coefficients $b\neq 0$  and $c_A\in\mathbb{R}$ for $A\in\mathcal{A}$. Note that as $b\to \infty$, the LMSR converges to the constant sum rule. It is more customary for prediction markets to describe AMMs through cost functions. Following the above construction, we let $I^0\in\mathbb{R}_+^{\mathcal{A}}$ be an arbitrary initial inventory. 
Then, the cost function is given by $$C_{LMSR}(r)=-f(I^0-r)=b\log\left(\sum_{A\in\mathcal{A}}e^{-(I_A^0-r_A-c_A)/b}\right)=b\log\left(\sum_{A\in\mathcal{A}}e^{(r_A-\tilde{c}_A)/b}\right)$$ for $\tilde{c}_A:=c_A-I^0_A$ which is the usual LMSR cost function. Alternatively, the LMSR can be defined, as in the original definition due to~\cite{hanson2003}, by a logarithmic scoring rule, $S_A(p)=\tilde{c}_A+b\log(p_A)$ for $A\in\mathcal{A}$.\footnote{\label{footnote} The last equivalence can be seen as follows:
dimension $A\in\mathcal{A}$ of the scoring rule is the net profit of the trade $r$ if  (Arrow-Debreu) security $A$ pays out at settlement. Thus, $S_A(p)=r_A-C(r)$. The marginal price (see the subsequent section) of $A$ at trade $r$ is $p_A=\frac{\partial C(r)}{\partial r_A}=\frac{e^{(r_A-\tilde{c}_A)/b}}{\sum_{A\in\mathcal{A}}e^{(r_A-\tilde{c}_A)/b}}$.
Therefore, for the logarithmic score $S_A(p)=\tilde{c}_A+b\log(\frac{e^{(r_A-\tilde{c}_A)/b}}{\sum_{A\in\mathcal{A}}e^{(r_A-\tilde{c}_A)/b}})=r_A-b\log(\sum_{A\in\mathcal{A}}e^{(r_A-\tilde{c}_A)/b})=r_A-C(r)$ as required.}
\end{itemize}
\begin{figure}
    \centering
\includegraphics[width=7cm]{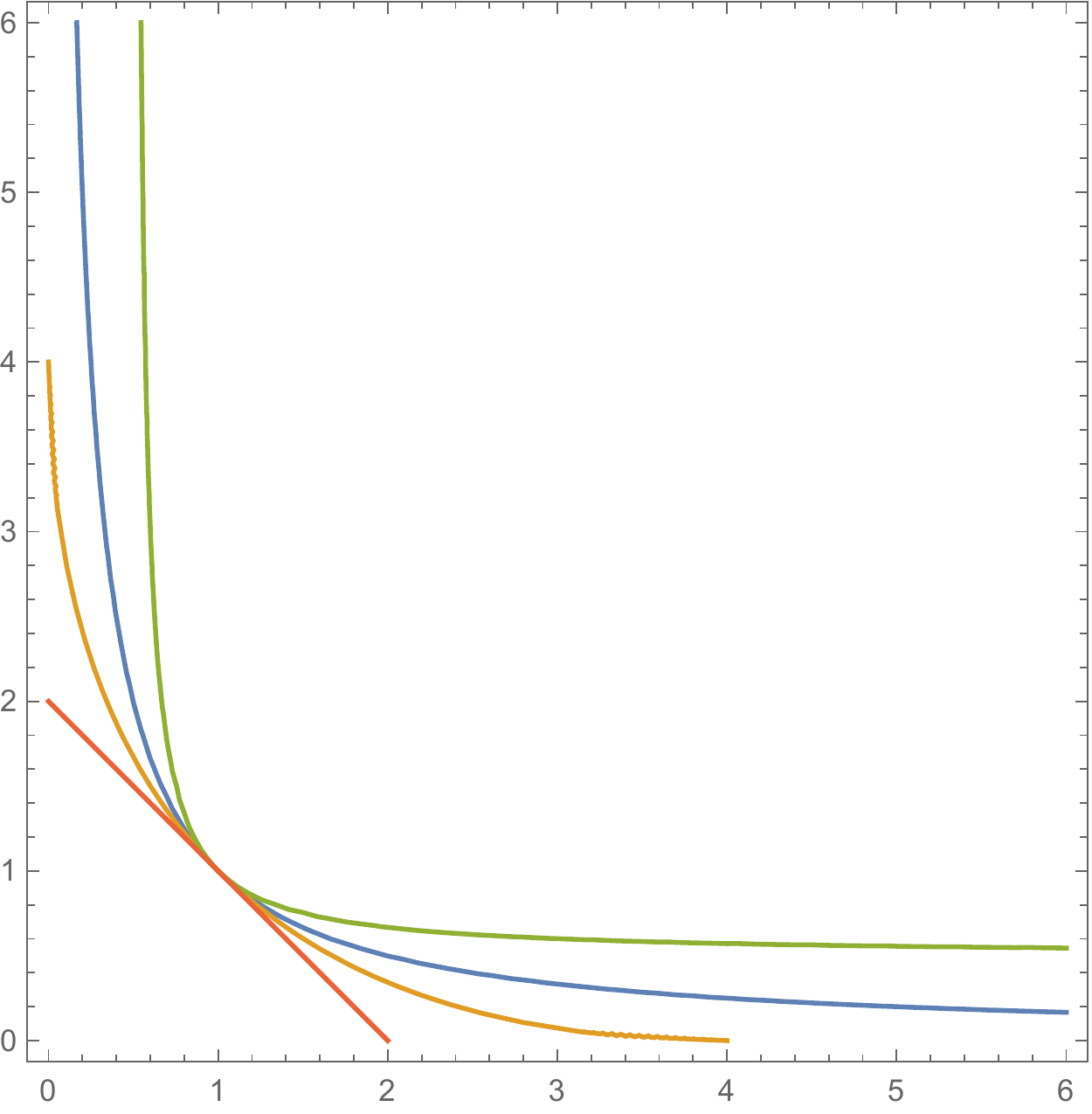}
    \caption{Different CEMMs:  elasticity of $1/2$ (green), i.e. elasticity of $1$ (blue) equivalent to constant product, the elasticity of $2$ (yellow), infinite elasticity (orange) equivalent to constant sum.}
    \label{fig:my_label2}
\end{figure}
Next, we define a class of AMMs that has been introduced under different names in DeFi~\cite[]{Niemerg2020,Othman2021,wu2022}.\footnote{The functional form is of course well known in economics, in particular in the theory of production to describe production processes where percentage changes in choices of production factors change proportionally to percentage changes in factor prices.} They are characterized by the property that, at the margin, the same percentage change in the exchange rate of two assets $A$ and $B$ always corresponds to the same percentage change in inventory ratio $I_A/I_B$ (See Remark~1). The constant sum and constant geometric mean market makers are special cases.
\begin{itemize}
    \item {\bf Constant inventory elasticity} AMMs ({\bf CEMMs}) parameterized by  $\gamma\in \mathbb{R}$,
$$f(I)=
c\left(\sum_{A\in\mathcal{A}}\alpha_AI_A^{\gamma}\right)^{1/\gamma},\quad\text{ for }\gamma\neq 0,$$
and$$
f(I)=c\prod_{A\in\mathcal{A}}I_A^{\alpha_A},\quad\text{ for }\gamma=0,
$$
for constants $c$ and $\alpha_A>0$ with $\sum_{A\in\mathcal{A}}\alpha_A=1.$
We call $\tfrac{1}{1-\gamma}$ the {\bf inventory elasticity} of the AMM. A CEMM has {\bf positive} constant inventory elasticity if $\gamma\leq 1$.  
\end{itemize}
\subsection{Basic properties of AMMs}
Next, we introduce several axioms on trading functions.
First, we consider a differentiability condition that allows us  to define marginal prices. Importantly, our characterization results will not assume differentiability, but it will be a consequence of the combination of other axioms that the characterized trading functions are smooth or equivalent to a smooth trading function.\newline

\noindent{\bf Existence of Marginal Prices}: The function
$f$
is differentiable everywhere.\newline

Note that all of the above examples satisfy this property.
Under this assumption, the {\bf marginal price} of asset$A\in\mathcal{A}$ measured in the numéraire is $\frac{\partial f}{\partial I_A}$. Thus, the {\bf marginal exchange rate} of assets $A$ and $B$ for $A,B\in\mathcal{A}$ is $$p_{A,B}(I)\equiv\frac{\tfrac{\partial f}{\partial I_A}}{\frac{\partial f}{\partial I_B}}.$$ This means that if the current inventory of the AMM is $I$, then swapping asset $A$ for asset $B$ has, at the margin, an exchange rate of $p_{A,B}(I).$

\begin{remark}\label{invelasticity}
For the CEMMs and the LMSRs defined above, there is a proportional relation between the inventory ratio $I_B/I_A$ resp the difference in inventory $I_B-I_A$ and changes in the marginal exchange rate between $A$ and $B$: 
For the CPMM a one percent increase in the inventory ratio leads to a one percent increase in the  marginal exchange rate. More generally, for a CEMM with parameter $\gamma$, a one percent increase in the inventory ratio leads to an increase of $1-\gamma$ percent in the marginal exchange rate. Related to this, the CPMM is characterized by the feature that the value of the inventory of asset type $A$ measured in the numéraire $\frac{\partial f }{\partial I_A}I_A$ is always equal to the value of the inventory of asset type $B$ measured in the numéraire $\frac{\partial f }{\partial I_B}I_B$.

For the LMSR, an increase of $I_B-I_A$ by one unit leads to a $b$ percent increase in the  marginal exchange rate. 
\end{remark}

%

%


%

It is a desirable property of an AMM to have an increasing marginal cost for swapping assets as the trade size increases. In other words, liquidity curves should be convex. This is related to the notion of (im)permanent loss for liquidity providers: The (im)permanent loss of a liquidity provider is given by the market exposure of the liquidity position: If the exchange rate of the two assets moves a lot, holding and providing liquidity performs worse than keeping the equivalent amount of assets A and B without providing liquidity. If the marginal cost for the appreciating asset is increasing then inducing a price change that induces (im)permanent loss is more expensive for the market. \newline

\noindent{\bf Aversion to permanent loss}: For each $J\in\mathbb{R}_+^{\mathcal{A}}$,  the set $\{I:f(I)\geq f(J)\}$ is convex.\newline

If the trading function $f$ satisfies the existence of marginal prices, then aversion to permanent loss for two assets is equivalent to assuming that the relative marginal price is decreasing as a function of $I_A$ along each liquidity curve, i.e. $p_{A,B}(I_A,I_B(I_A),I_{-A,B})$ is decreasing in $I_A$. All examples of trading functions considered above, except for CEMMs with negative elasticity, satisfy aversion to permanent loss. Aversion to permanent loss is assumed as a basic assumption in large parts of the literature on CFMMs, see~\citet{CFMM,angeris2021,Angeris2022abs,capponi2021}. Similarly, prediction market AMMs are usually assumed to have convex cost functions~\cite[]{Abernethy2011} which implies (by change of sign) that the trading function is concave and hence has convex liquidity curves.

Next, we consider the property that the AMM is always able to make the market no matter how large the trade size is. Geometrically this means that liquidity curves do not intersect the axes. Recall that we use $I>0$ to indicate that all dimensions of $I\in\mathbb{R}_+^{\mathcal{A}}$ are strictly positive and define:\newline

\noindent{\bf Sufficient Funds}: For each $I,J\in\mathbb{R}_+^{\mathcal{A}}$ with  $f(I)=f(J)$
we have $$I>0\Rightarrow J>0.$$ 
\begin{remark}
If the sufficient funds condition is violated, then we have concentrated liquidity, which means that the AMM accepts trades only up to a bounded size. This sometimes is desirable, for example for the purpose of risk management for liquidity providers. The Uniswap V3 rules are an example of a trading function with concentrated liquidity and therefore of a violation of the sufficient funds' condition.
CEMMs satisfy it for positive inventory elasticity smaller or equal $1$ and violate it for negative inventory elasticity and for inventory elasticity of more than $1.$
\end{remark}
The next requirement is that liquidity curves should be invariant under the scaling of the inventory. The interpretation is straightforward: If liquidity is added or removed from the AMM proportionally to the current inventory levels for each asset, then the exchange rate of the assetshould not change.\newline

\noindent{\bf Scale invariance}:
For each $I,J\in \mathbb{R}^{\mathcal{A}}_{+}$ and $\lambda>0$
$$f(I)=f(J)\Rightarrow f(\lambda I)=f(\lambda J).$$\newline
Invariance under scaling of the inventory has been assumed or discussed in previous work, usually in the stronger form of (positive) homogeneity~\citep{capponi2021} or (positive) linear homogeneity~\citep{Othman2021,Angeris2022abs}.
In practice, liquidity is tokenized so that it becomes fungible and tradable. For this to be possible we need that the value of the  inventory (that is tokenized) scales proportionally with the inventory.\newline

\noindent{\bf Homogeneity in Liquidity}:
For each $I\in \mathbb{R}_{+}^{\mathcal{A}}$ and $\lambda>0$ 
$$f(\lambda I)= \lambda f(I).$$

It is straightforward to see that each scale-invariant trading function is equivalent to a (linearly) homogeneous trading function. So we can obtain our subsequent results alternatively for equivalence classes of scale invariant trading functions or for homogeneous trading functions (analogously if we would require positive homogeneity of degree $k$ instead we would obtain the same classes of functions re-scaled by a monotonic transformation with the $k$-th power function).
\begin{remark}
From the above examples, the Uniswap V3 and Hanson's rule violate scale invariance whereas the other rules satisfy it.
This means that liquidity can be made fungible for all other rules.
\end{remark}

An alternative invariance property that is usually required for AMMs in prediction markets is that liquidity curves should be invariant under translations. The interpretation is straightforward in that context: for a prediction market a risk-less portfolio consisting of the same amount of each of the assets has the same payoff measured in the numéraire in each state of the world. Therefore buying it should always cost the same.\newline

\noindent{\bf Translation invariance}:
For each $I,J\in \mathbb{R}^{\mathcal{A}}_{+}$ and $\alpha \in\mathbb{R}$ denoting the all $1$ vector by $\mathbbm{1}$,
$$f(I)=f(J)\Rightarrow f(I+\alpha\mathbbm{1})=f( J+\alpha\mathbbm{1}).$$

Translation invariance is a standard assumption in the literature on AMMs for prediction markets, often in  a stronger version that describes the scaling properties of an individual trading function rather than equivalence classes of trading functions under monotone transformations:\newline

\noindent{\bf One invariance}:
For each $I\in \mathbb{R}^{\mathcal{A}}_{+}$ and $\alpha \in\mathbb{R}$,
$$f(I+\alpha\mathbbm{1})=f(I)+\alpha\mathbbm{1}.$$
An important consequence of translation invariance is that the AMM can equivalently descibed by a scoring rule.

Finally, we consider a symmetry axiom that requires that in the absence of additional information on the nature of the assets, the market should be made in the same way if the names of the assets are flipped:\newline

\noindent{\bf Symmetry}: For each  permutation $\pi:\mathcal{A}\rightarrow\mathcal{A}$  and $I\in\mathbb{R}_+^{\mathcal{A}}$ we have 
$$f(I)=f((I_{\pi(A)})_{A\in\mathcal{A}}).$$
All above examples (for the CEMMs and LMSRs coefficients need to be chosen the same for all assets, $c_A=c_B$ for $A,B\in\mathcal{A}$) satisfy symmetry.
\section{Characterization for more than two assets}\label{multiple}
In this section, we provide several characterizations of AMMs for more than two assets. In Section~\ref{results}, we consider extensions to the case of exactly two assets where it is necessary to introduce a different axiom.

Many popular AMMs used in practice, can be equivalently described with a separable liquidity curve of the form:
$$\sum_{A\in\mathcal{A}}\phi_A(I_A)=k.$$
Also, CPMMs, or more generally weighted geometric means, belong to this class because the curve can be brought to the above form by using a logarithmic transformation.\footnote{To cover the case of $0$ inventories, we would allow for negatively infinite values of $f$ on the boundary in that case.} Similarly, the LMSRs belong to this class because the curve can be brough to the above form by using an exponential transformation. Separability turns out to be a consequence of an independence property that can be a desirable feature of an AMM: we call the marginal exchange rate between two assets \emph{independent} from other assets if the marginal exchange rate is a function of the inventories of these two assets alone and does not depend on the inventory of a third asset.
\begin{example}\label{ex:non-independent}
Consider the following non-separable trading function for three assets:
$$f(I)=I_AI_B+I_AI_C+I_BI_C.$$
The marginal exchange rate between $A$ and $B$ is given by
$$p_{A,B}(I)=\frac{I_B+I_C}{I_A+I_C},$$ which depends on the inventory of asset $C$. 
\end{example}
Independence occurs naturally if different asset pairs are traded in autonomous AMMs that do not directly influence each other, which is a common design choice of DeFi protocols. Independence can also be desirable as a robustness property: Price manipulation, e.g. for the purpose of a front and back-running attack, is a frequent concern in decentralized finance. Independence eliminates the possibility of certain types of such attacks; for example, it is not possible to manipulate an exchange rate for an asset pair by providing liquidity for a different asset or by trading another asset pair not involved in the original pair. Generally, independence is a desirable property, unless the traded assets are highly correlated in value or are different representations of the same underlying type of asset. This would for example be the case for stablecoin pairs. In the latter case, it is often desirable that each trade affects all assets involved in the AMM. 
 
Independence can more generally be formulated without the assumption of the existence of marginal prices which allowed us to refer to marginal exchange rates. In the more general formulation, independence requires that the terms of trade for trading a subset of assets $\mathcal{B}\subseteq\mathcal{A}$ does not depend on the inventory levels for the other assets $\mathcal{A}\setminus \mathcal{B}$.\newline

\noindent{\bf Independence}:
For each subset of assets $\mathcal{B}\subseteq\mathcal{A}$ and inventories  $I,J$ we have
$$f(I_{\mathcal{B}},J_{-\mathcal{B}})= f(J)\Leftrightarrow f(I)= f(J_{\mathcal{B}},I_{-\mathcal{B}}).$$
\begin{remark}
It is also instructive to consider the independence axiom formulated in terms of cost functions: \newline

\noindent{\bf Independence (Cost Function Version)}:
For each subset of assets $\mathcal{B}\subseteq\mathcal{A}$ and trades  $r^B,q^B\in\mathbb{R}^{\mathcal{A}}$ with $r^B|_{-\mathcal{B}}=q^B|_{-\mathcal{B}}=0$ and $r^{-B},q^{-B}\in\mathbb{R}^{\mathcal{A}}$ with $r^{-B}|_{\mathcal{B}}=q^{-B}|_{\mathcal{B}}=0$  we have
$$C(r^{- B}+r^B)-C(r^{-B})=C(r^{-B}+q^B )-C(r^{- B})\Leftrightarrow C(q^{-B}+r^B)-C(q^{-B})=C(q^{-B}+q^B )-C(q^{-B}).$$
Thus independence requires that whether the cost of two trades $r^B,q^B$ involving only assets in $\mathcal{B}$ is the same does not depend on previous trades  that did not involve assets in $\mathcal{B}$.
\end{remark}
The independence axiom only has a bite for more than two assets. In that case, independence and monotonicity allow us to obtain an equivalent separable representation (see~\citet{krantz1971}, Section 6.11) of the trading function.
The possibility of such representation crucially depends on the assumption of more than two assets. In Section~\ref{results}, we show that we need an additional axiom, Liquidity Additivity, which we do not need for the case of more than two assets.
%
\subsection{Characterization under scale invariance}
Combining separability with scale invariance imposes additional structure on the functions $\{\phi_A\}_{A\in \mathcal{A}}$. With scale invariance, we can show that the functions $\{\phi_A\}_{A\in \mathcal{A}}$ have to be of the power form ($\phi_A=c_AI_A^\gamma$) or logarithmic form. In particular, the combination of scale invariance and separability implies the existence of marginal prices everywhere, i.e.~smoothness of the liquidity curve is a consequence of scale invariance and independence. This is the content of the following theorem:
\begin{theorem}\label{main_result}
A trading function for $|\mathcal{A}|>2$ assets satisfies independence and scale invariance if and only if it is equivalent to a CEMM. 
\end{theorem}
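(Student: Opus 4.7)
The forward direction is a routine check, since every CEMM is additively separable (directly for $\gamma\neq 0$, and after a logarithmic transformation for $\gamma=0$) and its level sets are closed under scaling. I focus on the converse, which I would prove in two stages.

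\textbf{Stage 1 (separable representation).} Because $|\mathcal{A}|\geq 3$ and $f$ is continuous and strictly monotonic in each coordinate, the independence axiom places us in the setting of the classical additive-conjoint-measurement theorem cited in the paper (Krantz et al.~1971, Section~6.11). Applying it yields continuous strictly increasing functions $\phi_A:\mathbb{R}_+\to\mathbb{R}$ such that
$$f(I)=f(J) \iff \sum_{A\in\mathcal{A}}\phi_A(I_A)=\sum_{A\in\mathcal{A}}\phi_A(J_A).$$
After replacing $f$ by an equivalent function, I may assume $f(I)=\sum_A\phi_A(I_A)$.

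\textbf{Stage 2 (pinning down the $\phi_A$).} Scale invariance combined with strict monotonicity delivers, for each $\lambda>0$, a continuous strictly increasing $T_\lambda$ satisfying $\sum_A\phi_A(\lambda I_A)=T_\lambda\bigl(\sum_A\phi_A(I_A)\bigr)$. Fixing all but two coordinates $A,B$ and absorbing constants on the right-hand side reduces this to the two-variable relation
$$\phi_A(\lambda x)+\phi_B(\lambda y)=G_\lambda\bigl(\phi_A(x)+\phi_B(y)\bigr).$$
The substitution $s=\log\lambda$, $u=\log x$, $v=\log y$, $\psi_A(t)=\phi_A(e^t)-\phi_A(1)$ converts the scaling to a translation, with $\psi_A(0)=\psi_B(0)=0$:
$$\psi_A(s+u)+\psi_B(s+v)=\tilde H_s\bigl(\psi_A(u)+\psi_B(v)\bigr).$$
Setting $u=0$ and $v=0$ in turn and comparing with the full equation forces $\tilde H_s(a+b)=\tilde H_s(a)+\tilde H_s(b)-\tilde H_s(0)$, so $K_s(a):=\tilde H_s(a)-\tilde H_s(0)$ is monotonic and additive. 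Cauchy's equation then gives $K_s(a)=m(s)\,a$ for some $m(s)>0$, and continuity of $f$ propagates to continuity of $m$ in $s$.

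Substituting $\tilde H_s(a)=m(s)a+\tilde H_s(0)$ back yields $\psi_A(s+u)=m(s)\psi_A(u)+\psi_A(s)$, and interchanging $s,u$ forces $(m(s)-1)\psi_A(u)=(m(u)-1)\psi_A(s)$. Two cases emerge. If $m\equiv 1$, then $\psi_A$ is additive, hence linear by monotonicity, so $\phi_A(x)=\alpha_A\log x+\text{const}$. Otherwise $\psi_A(s)=c_A(m(s)-1)$, which upon resubstitution forces $m(s+u)=m(s)m(u)$, hence $m(s)=e^{\gamma s}$ for some $\gamma\neq 0$ by continuity, giving $\phi_A(x)=c_A x^\gamma+\text{const}$. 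A routine normalization (taking the $1/\gamma$-th power in the first case, exponentiating in the second) aligns the resulting separable trading function with the stated CEMM form.

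The main obstacle I expect is carrying out this analysis without any differentiability assumption. The reduction to Cauchy's equation uses only monotonicity and continuity of the $\phi_A$, but propagating enough regularity from $f$ to the multiplier $m(s)$ — so as to rule out pathological non-measurable solutions of $m(s+u)=m(s)m(u)$ — is the most delicate point, and is also where the continuity hypothesis on $f$ is essential (consistent with Appendix~\ref{Discontinuous}).
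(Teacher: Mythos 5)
Your argument is correct, and while Stage 1 coincides exactly with the paper's (the same appeal to additive conjoint measurement), your Stage 2 is a genuinely different route. The paper works on the ray $\phi(t)=\tilde f(e^tJ)$, derives the functional equation $\phi(s+\alpha(t))-\phi(s)=\phi(s+t)-\phi(s+\beta(t))$ with $0<\alpha(t),\beta(t)<t$, and extracts $\phi''=\gamma\phi'$ by mollifying $\phi$ and computing a delicate double limit; regularity is bought with convolution against an approximate identity. You instead exploit the fact that scale invariance of a separable $f$ produces a level transformation $T_\lambda$, reduce (after the log substitution) to a restricted-domain Cauchy equation for $K_s$, and then to the pair of equations $\psi_A(s+u)=m(s)\psi_A(u)+\psi_A(s)$ and $m(s+u)=m(s)m(u)$ --- classical Acz\'el-style functional equations solved by monotonicity and continuity alone, with no differentiation or smoothing anywhere. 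Your worry about pathological solutions of the multiplicative Cauchy equation is already resolved by your own setup: fixing any $u_0$ with $\psi_A(u_0)\neq 0$ gives $m(s)=\bigl(\psi_A(s+u_0)-\psi_A(s)\bigr)/\psi_A(u_0)$, so $m$ inherits continuity from $\psi_A$ directly. One point worth making explicit (it falls out of your equations but you do not say it): the same $m$ appears in the relations for $\psi_A$ and $\psi_B$ within each pair, and $m$ is determined by each $\psi_A$ individually, so the exponent $\gamma$ is automatically common to all assets. On balance your approach is more elementary and arguably cleaner; the paper's mollification argument has the advantage of reusing the identical functional equation~\eqref{eq:difference} verbatim in the translation-invariant case (Theorem~\ref{secondmain}), whereas your reduction would need to be redone with $T_\lambda$ replaced by the level map induced by translation.
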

\begin{proof}
One readily checks that CEMMs satisfy both axioms.

Continuity, monotonicity, and independence imply by a classical result (see~\citet{krantz1971} Section 6.11) that $f$ is equivalent to a trading function $\tilde{f}$ satisfying the separability condition: there are one-dimensional, continuous, increasing functions $\{\phi_A\}_{A \in \mathcal{A}}$ such that
$$\tilde{f}(I)=\sum_{A\in\mathcal{A}}\phi_A(I_A).$$
The major part of the proof consists in showing that the functions $\{\phi_A\}_{A \in \mathcal{A}}$ are monomials or logarithms, up to addition by a constant.

We choose an asset $A \in \mathcal{A}$ and we fix arbitrary inventory levels $J$. For $t > 0$ we define $\alpha(t)$ and $\beta(t)$ implicitly by
$$\tilde{f}(e^{\alpha(t)} J) = \tilde{f}(e^t J_A, J_{-A}), \qquad \tilde{f}(e^{\beta(t)} J) = \tilde{f}(J_A, e^t J_{-A}).$$
Recall that $\tilde{f}$ is continuous and, by monotonicity,
$$\tilde{f}(J) < \tilde{f}(e^t J_A, J_{-A}) < \tilde{f}(e^t J) .$$
Thus, $\alpha(t)$ is well-defined and $0 < \alpha(t) < t$. Similarly, $\beta(t)$ is well-defined and $0 < \beta(t) < t$.

By scale invariance, we find that for every real $s$,
$$\tilde{f}(e^{s + \alpha(t)} J) = \tilde{f}(e^{s + t} J_A, e^s J_{-A}), \qquad \tilde{f}(e^{s + \beta(t)} J) = \tilde{f}(e^s J_A, e^{s + t} J_{-A}),$$
and separability implies that
$$\tilde{f}(e^{s + t} J_A, e^s J_{-A}) + \tilde{f}(e^s J_A, e^{s + t} J_{-A}) = \tilde{f}(e^s J) + \tilde{f}(e^{s + t} J).$$
It follows that
$$\tilde{f}(e^{s + \alpha(t)} J) + \tilde{f}(e^{s + \beta(t)} J) = \tilde{f}(e^s J) + \tilde{f}(e^{s + t} J).$$
In other words, if we define
$$\phi(t) = \tilde{f}(e^t J),$$
then $\phi$ is increasing and for every real $s$ and every $t > 0$,
\begin{equation}\label{eq:difference}\phi(s + \alpha(t)) - \phi(s) = \phi(s + t) - \phi(s + \beta(t)),\end{equation}
where $0 < \alpha(t) < t$ and $0 < \beta(t) < t$. We claim that the above properties imply that, up to addition by a constant, $\phi$ is an exponential function or a linear function.

Suppose first that $\phi$ has continuous second-order derivatives, and $\phi'$ is everywhere positive.
By a direct calculation and continuity of $\phi''$,
\begin{align*}& \lim_{t \to 0^+} \frac{ \phi(s + \alpha(t) + \beta(t)) - \phi(s + \beta(t))-( \phi(s + \alpha(t))-\phi(s)) }{\alpha(t) \beta(t)}
\\ \qquad & = \lim_{t \to 0^+} \int_0^1 \frac{\phi'(s + \alpha(t) + \beta(t) y)-\phi'(s  + \beta(t) y)}{\alpha(t)} dy
\\ \qquad & = \lim_{t \to 0^+} \int_0^1 \int_0^1 \phi''(s + \alpha(t) x + \beta(t) y) dx dy = \phi''(s) .\end{align*}
By combining the previous identity with equation~\eqref{eq:difference}, we obtain
$$\phi''(s) =\lim_{t \to 0^+} \frac{\phi(s + \alpha(t) + \beta(t)) - \phi(s + t)}{\alpha(t) \beta(t)}.$$
However, since $\phi'$ is positive and continuous, we have
\begin{align*}& \lim_{t \to 0^+} \frac{\phi(s + \alpha(t) + \beta(t)) - \phi(s + t)}{\alpha(t) \beta(t)} 
\\ & \qquad = \lim_{t \to 0^+} \frac{\phi(s + \alpha(t) + \beta(t)) - \phi(s + t)}{\alpha(t)+\beta(t)-t}\frac{\alpha(t) + \beta(t) - t}{\alpha(t) \beta(t)}
\\ & \qquad = \lim_{t \to 0^+} \int_0^1 \phi'(s + t + (\alpha(t) + \beta(t) - t) x) dx \times \frac{\alpha(t) + \beta(t) - t}{\alpha(t) \beta(t)} \\ & \qquad =  \phi'(s)\lim_{t \to 0^+} \frac{\alpha(t) + \beta(t) - t}{\alpha(t) \beta(t)} \, .\end{align*}
Thus, the limit $\gamma:=\lim_{t \to 0^+} \frac{\alpha(t) + \beta(t) - t}{\alpha(t) \beta(t)}$ exists and for  every real $s$ we have
$$\phi''(s) = \gamma \phi'(s).$$ By solving this differential equation, we conclude that for some real constants $c, d$,
\begin{equation}\label{eq:phi:exponential}\phi(s) = \begin{cases} c e^{\gamma s} + d & \text{if $\gamma \ne 0$,} \\ c s + d & \text{if $\gamma = 0$.} \end{cases}\end{equation}

In the general case, when $\phi$ is not assumed to be smooth, we use an approximation argument. Let $\eta_n$ be an approximation to the identity: $\eta_n(x) = n \eta(n x)$, where $\eta$ is a non-negative smooth compactly supported function, and let $\phi_n$ be the convolution of $\phi$ and $\eta_n$:
$$\phi_n(t) = \int_{-\infty}^\infty \phi(s) \eta_n(t - s) ds .$$
Then $\phi_n$ is smooth, and since $\phi$ is increasing, $\phi_n'$ is everywhere positive. Furthermore, it is straightforward to check that $\phi_n$ satisfies~\eqref{eq:difference}. By the argument given above under the assumption that $\phi$ is smooth, there is a real number $\gamma$ such that we have
$$\phi_n(s) = \begin{cases} c_n e^{\gamma s} + d_n & \text{if $\gamma \ne 0$,} \\ c_n s + d_n & \text{if $\gamma = 0$,} \end{cases}$$
where $c_n, d_n$ are appropriate constants. Since $\phi_n$ converges pointwise to $\phi$, we conclude that~\eqref{eq:phi:exponential} also holds in the general case, and our claim is proved.

Finally, we consider the extension of the ray through the origin and $J$. Recall that $\phi(t) = \tilde{f}(e^t J)$, and that $\alpha(t)$ was defined in such a way that
$$\tilde{f}(e^{s + t} J_A, e^s J_{-A}) = \tilde{f}(e^{s + \alpha(t)} J) = \phi(s + \alpha(t)) .$$
 By separability,
$$\phi_A(e^{s+t}J_A) - \phi_A(e^s J_A) = \tilde{f}(e^{s + t} J_A, e^s J_{-A}) - \tilde{f}(e^s J) = \phi(s + \alpha(t)) - \phi(s) .$$
Let $I_A>0$ and let $s>0$ such that $I_A=e^sJ_A$. We already know that $\phi'$ is continuous and everywhere positive and that $0<\alpha(t)<t$. Thus,
\begin{align*} \lim_{t \to 0^+} \frac{\phi_A(e^{t} I_A) - \phi_A(I_A)}{t}  & = \lim_{t \to 0^+} \frac{\phi(s + \alpha(t)) - \phi(s)}{t} \\ & = \lim_{t \to 0^+} \frac{\alpha(t)}{t} \int_0^1 \phi'(s + \alpha(t) x) dx \\ & = \lim_{t \to 0^+} \frac{\alpha(t)}{t} \phi'(s) \, .\end{align*}
Formally, in the above identity we pass to the limit as $t \to 0^+$ along a subsequence such that $\frac{\alpha(t)}{t}$ converges to some limit $\alpha_0$. It follows that $\phi_A$ is differentiable, and
$$I_A \phi_A'(I_A) = \alpha_0 \phi'(s)=\alpha_0\phi'(\log(I_A/J_A)).$$
We conclude that
$$\phi_A(I_A) = \begin{cases} c_A I_A^\gamma + d_A & \text{if $\gamma \ne 0$,} \\ c_A \log I_A + d_A & \text{if $\gamma = 0$,} \end{cases}$$
where $c_A$ and $d_A$ are appropriate constants. Since $\phi_A$ is increasing, we have $c_A > 0$ when $\gamma \ge 0$, and $c_A < 0$ when $\gamma < 0$.

The asset $A$ was chosen arbitrarily, and therefore
$$\tilde{f}(I) = \begin{cases} \displaystyle d + \sum_{A \in \mathcal{A}} c_A I_A^\gamma & \text{if $\gamma \ne 0$,} \\ \displaystyle d + \sum_{A \in \mathcal{A}} c_A \log I_A & \text{if $\gamma = 0$,} \end{cases}$$
where $d = \sum_{A \in \mathcal{A}} d_A$.

Let $c = \sum_{A \in \mathcal{A}} c_A$. By applying the monotonic transformation
$$M(z) = \begin{cases} (z - d)^{1 / \gamma} & \text{when $\gamma > 0$,} \\ e^{(z - d) / c} & \text{when $\gamma = 0$,} \\ (d - z)^{1 / \gamma} & \text{when $\gamma < 0$,} \end{cases}$$
we see that $\tilde{f}$ and therefore $f$ is equivalent to a trading function with constant inventory elasticity with parameter $\gamma$ and coefficients $\alpha_A := |c_A|$ when $\gamma \ne 0$ or $\alpha_A := c_A / c$ when $\gamma = 0$.
\end{proof}
It is straightforward to see that if the trading function additionally satisfies aversion to permanent loss then the inventory elasticity is positive and that if it additionally satisfies sufficient funds, then the inventory elasticity is greater than $1$. Therefore we have the following corollary:
\begin{corollary}\label{cor:aversion}
    A trading function for $|\mathcal{A}|>2$ assets
    \begin{enumerate}
    \item satisfies independence, scale invariance, and aversion to permanent loss if and only if it is equivalent to a CEMM with positive inventory elasticity. 
    \item satisfies independence, scale invariance, and sufficient funds if and only if it is equivalent to a CEMM with inventory elasticity $0<1/(1-\gamma)\leq 1$.
    \end{enumerate}
\end{corollary}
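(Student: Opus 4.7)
The plan is to apply Theorem~\ref{main_result} directly: independence and scale invariance already imply that $f$ is equivalent to a CEMM with some parameter $\gamma \in \mathbb{R}$. Since equivalent trading functions share identical upper contour sets and identical liquidity curves, both aversion to permanent loss and sufficient funds descend to the CEMM representative. The corollary therefore reduces to two one-parameter calculations: identifying which $\gamma$ give convex upper contour sets for part~(1), and which $\gamma$ keep liquidity curves inside the strictly positive orthant for part~(2).

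For part~(1), I would work with the auxiliary function $g(I) := \sum_{A \in \mathcal{A}} \alpha_A I_A^\gamma$ (replaced by $\sum_A \alpha_A \log I_A$ when $\gamma = 0$), so that $f = c\, g^{1/\gamma}$ is an increasing transform of $g$ when $\gamma > 0$ and a decreasing transform of $g$ when $\gamma < 0$. For $0 < \gamma \leq 1$, $t \mapsto t^\gamma$ is concave, hence $g$ is concave and its upper contour sets are convex; for $\gamma < 0$, $t \mapsto t^\gamma$ is convex, hence $g$ is convex and its lower contour sets are convex, which correspond to upper contour sets of $f$ under the order-reversing transform; the Cobb--Douglas case $\gamma = 0$ is classical. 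This settles the $\Leftarrow$ direction. Conversely, for $\gamma > 1$, $t \mapsto t^\gamma$ is strictly convex, so taking two distinct points $I, J$ on the same level set of $g$ (for instance related by a permutation of coordinates), strict convexity yields $g(\tfrac{I+J}{2}) < g(I) = g(J)$, and since $f$ is an increasing transform of $g$ in this regime, $f(\tfrac{I+J}{2}) < f(I) = f(J)$, contradicting convexity of the upper contour set.

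For part~(2), the argument is a boundary analysis of the CEMM. If $\gamma > 0$, the expression $c (\sum_A \alpha_A I_A^\gamma)^{1/\gamma}$ extends continuously to the boundary of $\mathbb{R}_+^{\mathcal{A}}$ and attains every positive level on the boundary by rescaling the remaining coordinates; hence for any $I>0$ with $f(I) = k>0$ one can pick $J$ with some $J_B = 0$ and $f(J) = k$, violating sufficient funds. If $\gamma \leq 0$, a direct computation shows $f(I) \to 0$ as any $I_A \to 0$ with the other coordinates held fixed and bounded (for $\gamma < 0$ because $I_A^\gamma \to \infty$ and the outer exponent $1/\gamma$ is negative, for $\gamma = 0$ by the product form), so every positive level set lies strictly inside the positive orthant. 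Rewriting $\gamma \leq 0$ in terms of elasticity gives exactly $0 < 1/(1-\gamma) \leq 1$.

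The only subtle point I anticipate is tracking orientation in part~(1) when $\gamma < 0$: because the outer transformation $u \mapsto u^{1/\gamma}$ is order-reversing, one must interchange upper and lower contour sets of $g$ before invoking convexity of sublevel sets. Beyond that, everything reduces to one-variable convexity/concavity of $t \mapsto t^\gamma$ and to standard boundary behavior of CES-type functions, so no further technical machinery is required.
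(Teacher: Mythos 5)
Your proposal is correct and follows essentially the same route as the paper: reduce to a CEMM via Theorem~\ref{main_result}, note that both axioms are preserved under equivalence, and then determine the admissible range of $\gamma$ by a one-parameter convexity analysis for part~(1) and a boundary analysis of the level sets for part~(2). The paper's own proof merely asserts the two facts you verify (liquidity curves are convex iff $\gamma\leq 1$; the level surfaces meet the axes iff $\gamma>0$), so your write-up is a correctly worked-out version of the same argument, including the orientation-reversal bookkeeping for $\gamma<0$ that the paper leaves implicit.
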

\begin{proof}
For the first part note that the liquidity curves defined by constant inventory elasticity are convex if and only if $\gamma\leq1$.  Thus, by the aversion to permanent loss, the inventory elasticity $\frac{1}{1-\gamma}$ is positive.

The second part follows from the observation that the surface defined by $\sum_{A\in\mathcal{A}}I_A^{\gamma}=k$ for $k>0$ intersects with the axes for $\gamma>0$ but not for $\gamma<0$. For $\gamma=0$ one immediately sees that geometric averages satisfy the sufficient funds' condition.
\end{proof}
Under the convexity assumption of aversion to permanent loss, we can describe AMMs in a dual way by the usual duality theory for convex sets and their support functions. Thus we can also obtain a dual version of the first part of the above corollary that we describe in Appendix~\ref{dual}.

It is straightforward to see that if scale invariance is replaced by homogeneity in the above characterizations the AMM is not only equivalent to a CEMM, but it actually \emph{is} a CEMM. We get the following version of Corollary~\ref{cor:aversion}.
\begin{corollary}\label{cor:homogen}
A trading function for $|\mathcal{A}|>2$ assets
    \begin{enumerate}
    \item satisfies independence, homogeneity in liquidity, and aversion to permanent loss if and only if it is equivalent to a CEMM with positive inventory elasticity. 
    \item satisfies independence, homogeneity in liquidity, and sufficient funds if and only if it is equivalent to a CEMM with inventory elasticity $0<1/(1-\gamma)\leq 1$.
    \end{enumerate}
\end{corollary}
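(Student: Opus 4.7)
The plan is to piggyback on Corollary~\ref{cor:aversion} and then use the multiplicative form of homogeneity to upgrade ``equivalent to a CEMM'' to ``is a CEMM,'' as foreshadowed in the paragraph preceding the statement.

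First, I would observe that homogeneity in liquidity trivially implies scale invariance: if $f(I)=f(J)$ then $f(\lambda I)=\lambda f(I)=\lambda f(J)=f(\lambda J)$ for every $\lambda>0$. Consequently Corollary~\ref{cor:aversion} applies under the hypotheses of either part and produces a CEMM $g$ (with $\gamma\le 1$ in part~1 and $\gamma\le 0$ in part~2) together with a strictly increasing continuous function $M$ on the range of $g$ such that $f=M\circ g$.

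Next, any CEMM $g$ in either normal form is itself homogeneous of degree one, so $g(\lambda I)=\lambda g(I)$ for $\lambda>0$. Combining with the homogeneity of $f$ yields
\begin{equation*}
M(\lambda g(I))=f(\lambda I)=\lambda f(I)=\lambda M(g(I)).
\end{equation*}
A quick check (for $\gamma=0$ via the logarithm, for $\gamma\neq 0$ directly from $\sum_A \alpha_A I_A^\gamma$ sweeping over $(0,\infty)$) shows that the image of $g$ on $\mathbb{R}^{\mathcal{A}}_{++}$ is all of $(0,\infty)$. Hence the Cauchy-type functional equation $M(\lambda z)=\lambda M(z)$ holds for every $z,\lambda>0$, and setting $z=1$ forces $M(z)=M(1)\,z$ with $M(1)>0$ by strict monotonicity. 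Therefore $f=M(1)\,g$ is itself a CEMM, with the same inventory elasticity $1/(1-\gamma)$ and coefficient $c\,M(1)$; in particular $f$ is equivalent to a CEMM, as asserted.

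For the converse one just checks that any CEMM of the relevant form satisfies independence (its level sets have the separable form $\sum_A \phi_A(I_A)=k$), homogeneity in liquidity (direct computation on either normal form), and aversion to permanent loss for $\gamma\le 1$ respectively sufficient funds for $\gamma\le 0$; both of the latter points were already verified inside the proof of Corollary~\ref{cor:aversion}. The only step that is not a direct citation is the one-line functional-equation argument $M(\lambda z)=\lambda M(z)\Rightarrow M(z)=M(1)z$, and this is the only place where homogeneity in liquidity does work beyond what scale invariance already delivered; I do not anticipate any real obstacle.
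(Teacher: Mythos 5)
Your proof is correct and follows essentially the same route as the paper: reduce to Corollary~\ref{cor:aversion} via the observation that homogeneity implies scale invariance, then use homogeneity in liquidity to pin down the monotone transformation $M$ through a Cauchy-type functional equation, concluding that $f$ is itself a CEMM (the paper applies $M$ to the separable inner form $\sum_A\alpha_AI_A^{\gamma}$ and gets $M(z)=cz^{1/\gamma}$, whereas you apply it to the degree-one homogeneous normal form and get $M(z)=M(1)z$ --- a cosmetic difference only). No gaps.
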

\begin{proof}
We have a homogeneous liquidity function $f$ and a strictly increasing function $M$ such that 
$$f(I)=M(\sum_{A\in\mathcal{A}}\alpha_AI_A^{\gamma})$$
with $0\neq\gamma\leq 1$
respectively $$f(I)=M(\prod_{A\in\mathcal{A}}I_A^{\alpha_A})$$
for constants $\alpha_A>0$ with $\sum_{A\in\mathcal{A}}\alpha_A=1$. In the first case, by homogeneity in liquidity, the monotone transformation $M$ is of the form $$M(z)=cz^{1/\gamma}$$ for a constant $c>0$. For $\gamma=0$,  by homogeneity in liquidity, the monotone transformation is of the form $M(z)=cz.$
Therefore the AMM defined by $f$ is of the constant elasticity form.
\end{proof}
The class of functions can be further narrowed down by adding the symmetry axiom which is natural in most applications. In the case of symmetry, the coefficients in the constant inventory formula are the same, i.e. $\alpha_A=\alpha_B$ for $A,B\in\mathcal{A}$. CFMMs in the class of scale-invariant, independent, symmetric, and sufficiently funded AMMs can be ranked (e.g. according to the Arrow-Pratt criterion) by the convexity of their induced liquidity curves. Convexity of liquidity curves has a natural interpretation: If a liquidity curve is more convex than another liquidity curve starting from a balanced inventory, the latter offers better terms of trade to the trader than the former and vice versa for liquidity providers.\footnote{The issue becomes more complex when liquidity provision is endogenous and fee revenues are considered. \citet{capponi2021} study the optimal choice of curvature among convex combinations of the constant sum and the constant product rule in an environment where liquidity providers, investors, and arbitrageurs interact. In their model, equilibrium payoffs for liquidity providers and investors are non-monotone in the curvature.} More formally, define a partial order on trading functions such that $f\succsim g$ if for each pair $A,B\in\mathcal{A}$, for each inventory level $I$ with $I_A=I_B$, $f$ offers more favorable terms of trade than $g$, i.e.~for $x>0$ and $y,y'>0$ such that $f(I_A+x,I_B-y,I_{-A,B})=f(I)$ and $g(I_A+x,I_B-y',I_{-A,B})=g(I)$ we have $y\geq y'$. Then we have the following theorem:
\begin{theorem}\label{opt:product}
Let $|\mathcal{A}|>2$. The constant product rule $f_{product}$ is trader optimal within the class of scale-invariant, independent, symmetric, and sufficiently funded AMMs, that is for each scale-invariant, independent, symmetric, and sufficiently funded $g$ we have $f_{product}\succsim g$ where $f_{product}\sim g$ if and only if $g$ equivalent to $f_{product}$.
\end{theorem}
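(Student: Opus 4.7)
The plan is to reduce the statement to a one-parameter comparison within the CEMM family using the existing classification results, and then to collapse the multi-asset comparison to a two-variable computation using independence and scale invariance. To begin, let $g$ be any scale invariant, independent, sufficiently funded, and symmetric trading function. By Corollary~\ref{cor:aversion} part~2, $g$ is equivalent to a CEMM with parameter $\gamma \le 0$ (i.e., inventory elasticity in $(0,1]$), and symmetry forces all coefficients $\alpha_A$ to be equal. Up to equivalence $g$ therefore has the separable form $\sum_{A}\phi(I_A)$ with $\phi(t)=t^\gamma$ for $\gamma<0$ or $\phi(t)=\log t$ for $\gamma=0$, the latter being $f_{\text{product}}$.

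Next I fix a pair $A,B$ and an inventory $I$ with $I_A=I_B=k$; the other coordinates of $I$ are arbitrary but fixed throughout the swap. By separability and symmetry, the swap constraint in the $(I_A,I_B)$ plane is $\phi(I_A)+\phi(I_B)=2\phi(k)$; the other inventories drop out, which is precisely the content of independence. A trade of size $x>0$ in $A$ yielding $y$ in $B$ is thus governed by $(k+x)^\gamma+(k-y)^\gamma=2k^\gamma$ for $\gamma<0$, and by $(k+x)(k-y)=k^2$ for $\gamma=0$, the latter giving $y_{0}=kx/(k+x)$. Normalizing via $u:=x/k$ and $v:=y/k$ (permitted by scale invariance), the constraints become $(1+u)^\gamma+(1-v)^\gamma=2$ and $v_0=u/(1+u)$ respectively.

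The core computation is then to show $v_\gamma<v_0$ for every $\gamma<0$ and $u>0$. Substituting the candidate $v=v_0$, so that $1-v_0=1/(1+u)$, the left-hand side of the $\gamma<0$ constraint evaluates to $(1+u)^\gamma+(1+u)^{-\gamma}$, which is strictly greater than $2$ by AM--GM whenever $u>0$. On the other hand, the map $v\mapsto(1+u)^\gamma+(1-v)^\gamma$ is strictly increasing in $v$ on $(0,1)$ (since $\gamma<0$ and $1-v$ is decreasing), so the unique $v_\gamma$ solving the equation lies strictly below $v_0$. Unwinding the rescaling yields $y_\gamma<y_0$, establishing $f_{\text{product}}\succsim g$ with strict inequality precisely when $\gamma<0$; hence $f_{\text{product}}\sim g$ if and only if $g$ is equivalent to $f_{\text{product}}$. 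The only delicate point rather than a genuine obstacle is tracking the sign of $\gamma$ and using independence to eliminate dependence on $I_{-A,B}$ so that the multi-asset comparison reduces to the clean two-variable AM--GM step above.
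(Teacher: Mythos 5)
Your proof is correct and follows essentially the same route as the paper's: reduce $g$ to a symmetric CEMM with $\gamma\le 0$ via the characterization results, then compare the liquidity curves through a balanced inventory in the $(I_A,I_B)$-plane, where the key inequality $(2-(1+u)^\gamma)^{1/\gamma}\ge (1+u)^{-1}$ is exactly your AM--GM step $(1+u)^\gamma+(1+u)^{-\gamma}\ge 2$. Your write-up is in fact slightly more explicit than the paper's, which asserts this inequality without naming AM--GM.
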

\begin{proof}
Let $J\in\mathbb{R}_+^{\mathcal{A}}$ with $J_A=J_B
=:\bar{I}$ for $A,B\in\mathcal{A}$. Let $A,B\in\mathcal{A}$ be two assets and consider the curve $I_B(I_A)$ implicitly defined by $f_{product}(I_{A,B},J_{-A,B})=f_{product}(J)$. The curve is given by the equation $$I_B(I_A)=\frac{\bar{I}^2}{I_A}.$$ Similarly, consider the curve $I_B(I_A)$ implicitly defined by $g(I_{A,B},J_{-A,B})=g(J)$ for an AMM $g$ with constant inventory elasticity $0<1/(1-\gamma)<1$. The curve is given by the equation $$I_B(I_A)=(2\bar{I}^{\gamma}-I_A^{\gamma})^{1/\gamma}.$$

Next note that for each $I_A>\bar{I}$ we have a larger inventory of asset $B$ after trading $I_A-\bar{I}$ units of asset $A$  for asset $B$  under $g$ than under $f_{product}$, since (as $\gamma<0$ and $I_A>\bar{I}$) $$(2\bar{I}^{\gamma}-I_A^{\gamma})^{1/\gamma}\geq \frac{\bar{I}^2}{I_A}.$$ Since the inventory of type $B$ is smaller under the product rule than under $g$, the trader has obtained more units of asset $B$  in exchange for $x=I_A-\bar{I}$ units of asset $A$ under the product rule than under $g$.
\end{proof}
\begin{remark}
We can alternatively consider the price impact a trader has on the AMM when trading against it.  Starting with an inventory $I$,  $f$ is locally more price-stable than $g$ if there is a $\bar{x}>0$ such that for each $\bar{x}>x>0$ and $y,y'>0$ with $f(I_A+x,I_B-y,I_{-A,B})=f(I)$ and $g(I_A+x,I_B-y',I_{-A,B})=g(I)$ the exchange rate under $f$ has moved less through the trade than under $g$:
$$p^f_{A,B}(
I_A+x,I_B-y,I_{-A,B})-p^f_{A,B}(I)\leq p^g_{A,B}(I_A+x,I_B-y',I_{-A,B})-p^g_{A,B}(I).$$ It is a straightforward calculation that starting from a balanced inventory $I$ with $I_A=I_B$ the product rule is locally more price-stable than any other CEMM with $\gamma<0$.
\end{remark}
\subsection{Characterization under Translation Invariance}
Prediction market AMMs usually satisfy a different invariance property than DeFi AMMs. So it is a natural question what happens if in the above characterizations scale invariance is replaced by translation invariance.  With translation invariance, we can show that the functions $\{\phi_A\}_{A\in \mathcal{A}}$ have to be of the exponential form ($\phi_A=e^{(c_A-I_A)/b}$) or linear form up to a constant. Therefore, instead of CEMMs we now get LMSR rules as the only independent rules that satisfy the desirable invariance property (interpreting the constant sum rule which also satisfies the two properties as the limit case of an LMSR as $b\rightarrow \infty$). Similarly as before, the combination of translation invariance and separability implies the existence of marginal prices everywhere, i.e.~smoothness of the liquidity curve is a consequence of translation invariance and independence. This is the content of the following theorem:
\begin{theorem}\label{secondmain}
A trading function for $|\mathcal{A}|>2$ assets satisfies independence and translation invariance if and only if it is equivalent to an LMSR market maker or a constant sum market maker. 
\end{theorem}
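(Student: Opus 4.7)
The plan is to follow the same structural blueprint as the proof of Theorem~\ref{main_result}, substituting the multiplicative group action $I \mapsto e^t I$ by the additive one $I \mapsto I + t\mathbbm{1}$. After checking in one line that the LMSR and constant sum market makers indeed satisfy independence and translation invariance, the main task is the converse. Continuity, monotonicity, and independence yield, by the same classical representation theorem of \citet{krantz1971}, a separable equivalent trading function
$$\tilde{f}(I) = \sum_{A \in \mathcal{A}} \phi_A(I_A),$$
where each $\phi_A$ is continuous and strictly increasing.

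Next, fix $A \in \mathcal{A}$ and a reference inventory $J$ with $J>0$. For $t>0$ define $\alpha(t), \beta(t)$ implicitly by
$$\tilde{f}(J + \alpha(t)\mathbbm{1}) = \tilde{f}(J_A + t, J_{-A}), \qquad \tilde{f}(J + \beta(t)\mathbbm{1}) = \tilde{f}(J_A, J_{-A} + t\mathbbm{1}_{-A}),$$
which is legitimate by monotonicity and continuity, and one checks that $0<\alpha(t)<t$ and $0<\beta(t)<t$. Applying translation invariance on both sides (shifting by $s\mathbbm{1}$) and then using separability
$$\tilde{f}(J_A + s+t, J_{-A} + s\mathbbm{1}_{-A}) + \tilde{f}(J_A + s, J_{-A} + (s+t)\mathbbm{1}_{-A}) = \tilde{f}(J + s\mathbbm{1}) + \tilde{f}(J + (s+t)\mathbbm{1}),$$
I obtain, for $\phi(t) := \tilde{f}(J + t\mathbbm{1})$, exactly the functional equation~\eqref{eq:difference}:
$$\phi(s + \alpha(t)) - \phi(s) = \phi(s+t) - \phi(s + \beta(t)).$$

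At this point, the smoothing argument of Theorem~\ref{main_result} (convolve $\phi$ with a mollifier $\eta_n$, apply the second-derivative calculation to the smooth approximant, conclude that $\phi''_n = \gamma \phi'_n$ for a common $\gamma$, and pass to the limit) can be used verbatim, yielding that up to an additive constant $\phi$ is of the form $c\, e^{\gamma s}$ if $\gamma \ne 0$ or $c s$ if $\gamma = 0$. Using the separability identity
$$\phi_A(J_A + s + t) - \phi_A(J_A + s) = \phi(s + \alpha(t)) - \phi(s),$$
and dividing by $t$ before letting $t \to 0^+$ along a subsequence along which $\alpha(t)/t$ converges, I conclude that $\phi_A$ is differentiable and of the same exponential or linear shape, i.e.\ $\phi_A(I_A) = c_A e^{\gamma I_A} + d_A$ when $\gamma \ne 0$ and $\phi_A(I_A) = c_A I_A + d_A$ when $\gamma = 0$. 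Since the choice of $A$ was arbitrary and the scalar $\gamma$ is intrinsic to the functional equation (not to $A$), the same $\gamma$ works for every asset.

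Finally, writing $c_A = -\operatorname{sign}(\gamma)\, e^{-\gamma c_A'}$ and setting $b = -1/\gamma$ when $\gamma \ne 0$, the monotone transformation $M(z) = -b \log(\mp(z-d))$ (sign chosen according to that of $\gamma$) converts $\tilde{f}$ into $-b \log \sum_A e^{(c_A' - I_A)/b}$, which is the LMSR form; in the case $\gamma = 0$ a simple affine monotone transformation yields the constant sum rule. I expect the only real obstacle to be purely bookkeeping—tracking the signs of the constants $c_A$ and of $b$ so that $\phi_A$ remains strictly increasing and the final transformation $M$ is indeed monotonic on the range of $\tilde{f}$; the analytic heart of the proof reuses the machinery already developed in Theorem~\ref{main_result}.
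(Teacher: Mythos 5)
Your proposal is correct and follows essentially the same route as the paper's own proof: the same Krantz-style separable representation, the same implicit definitions of $\alpha(t)$ and $\beta(t)$ under the additive shift, the reduction to the functional equation~\eqref{eq:difference} along the diagonal ray, the mollification argument borrowed from Theorem~\ref{main_result}, and the final monotone transformation to the LMSR or constant sum form. The observation that $\gamma$ is determined by $\phi(t)=\tilde{f}(J+t\mathbbm{1})$ and hence is common to all assets is also how the paper handles the uniformity of the exponent.
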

\begin{proof}
One readily checks that LMSR rules and the constant sum rules satisfy both axioms.

As in the proof of Theorem~\ref{main_result}, continuity, monotonicity, and independence imply that $f$ is equivalent to a trading function $\tilde{f}$ satisfying the separability condition: there are one-dimensional, continuous, increasing functions $\{\phi_A\}_{A \in \mathcal{A}}$ such that
$$\tilde{f}(I)=\sum_{A\in\mathcal{A}}\phi_A(I_A).$$
The major part of the proof consists in showing that the functions $\{\phi_A\}_{A \in \mathcal{A}}$ are exponentials or linear.

We choose an asset $A \in \mathcal{A}$ and fix arbitrary inventory levels $J$. For $t > 0$ we define $\alpha(t)$ and $\beta(t)$ implicitly by
$$\tilde{f}( J+\alpha(t)\mathbbm{1}) = \tilde{f}(J_A+t, J_{-A}), \qquad \tilde{f}(J+\beta(t)\mathbbm{1}) = \tilde{f}(J_A,  J_{-A}+t\mathbbm{1}_{-A}).$$
Recall that $\tilde{f}$ is continuous and, by monotonicity,
$$\tilde{f}(J) < \tilde{f}(J_A+t, J_{-A}) < \tilde{f}( J+t\mathbbm{1}) .$$
Thus, $\alpha(t)$ is well-defined and $0 < \alpha(t) < t$. Similarly, $\beta(t)$ is well-defined and $0 < \beta(t) < t$.

By translation invariance, we find that for every real $s$,
$$\tilde{f}( J+(s+\alpha(t))\mathbbm{1}) = \tilde{f}(J_A+s+t, J_{-A}+s\mathbbm{1}_{-A}), \qquad \tilde{f}(J+(s+\beta(t))\mathbbm{1}) = \tilde{f}(J_A+s,  J_{-A}+(s+t)\mathbbm{1}_{-A}).$$
and separability implies that
$$\tilde{f}(J_A+s+t, J_{-A}+s\mathbbm{1}_{-A}) + \tilde{f}(J_A+s, J_{-A}+(s+t)\mathbbm{1}_{-A}) 
 = \tilde{f}(J+s\mathbbm{1}) + \tilde{f}( J+(s+t)\mathbbm{1}).$$
It follows that
$$\tilde{f}( J+(s+\alpha(t))\mathbbm{1}) + \tilde{f}(J+(s+\beta(t))\mathbbm{1})  = \tilde{f}(J+s\mathbbm{1})  + \tilde{f}( J+(s+t)\mathbbm{1}).$$
In other words, if we define
$$\phi(t) = \tilde{f}(J+t\mathbbm{1}),$$
then $\phi$ is increasing and for every real $s$ and every $t > 0$,
$$\phi(s + \alpha(t)) - \phi(s) = \phi(s + t) - \phi(s + \beta(t)),$$
where $0 < \alpha(t) < t$ and $0 < \beta(t) < t$. Note that this functional equation is the same as equation~(\ref{eq:difference}). Thus, using the same argument as in the proof of Theorem~\ref{main_result} $\phi$ is smooth and there is a real number $\gamma$ such that we have
$$\phi(s) = \begin{cases} ce^{\gamma s} + d & \text{if $\gamma \ne 0$,} \\ c s + d & \text{if $\gamma = 0$,} \end{cases}$$
where $c, d$ are appropriate constants.

As in the proof of Theorem~\ref{main_result}, we consider the extension of the ray through the origin and $J$.
 By separability,
$$\phi_A(J_A+s+t) - \phi_A(J_A+s) = \tilde{f}(J_A+s+t, J_{-A}+s\mathbbm{1}_{-A}) - \tilde{f}(J+s\mathbbm{1}) = \phi(s + \alpha(t)) - \phi(s) .$$
For $I_A\geq0$, defining $s:=I_A-J_A$, dividing the previous equation by $t$ and passing to the limit as $t\to0$, we obtain by an analogous argument as in the proof of Theorem~\ref{main_result} that $\phi_A$ is differentiable and
$$\phi_A'(I_A) = \alpha_0\phi'(I_A-J_A).$$
We conclude that
$$\phi_A(I_A) =\phi_A(J_A)+\alpha_0\int_{0}^{I_A-J_A}\phi'(s)ds=\phi(0)+\alpha_0(\phi(I_A-J_A)-\phi(0))=\begin{cases} \tilde{c}_Ae^{\gamma I_A}+\tilde{d}_A & \text{if $\gamma \ne 0$,} \\ \tilde{c}_AI_A+\tilde{d}_A & \text{if $\gamma = 0$,} \end{cases}$$
where $\tilde{c}_A$ and $\tilde{d}_A$ are appropriate constants. The asset $A$ was chosen arbitrarily, and therefore
$$\tilde{f}(I) = \begin{cases} \displaystyle d-\sum_{A \in \mathcal{A}}e^{(\hat{c}_A-I_A)/b} & \text{if $0< b< \infty$,} \\ 
\displaystyle d +\sum_{A \in \mathcal{A}}e^{(\hat{c}_A-I_A)/b} & \text{if $b< 0$,} \\
\displaystyle d + \sum_{A \in \mathcal{A}} \tilde{c}_A I_A & \text{if $b= \infty$,} \end{cases}$$
where $d = \sum_{A \in \mathcal{A}} \tilde{d}_A$, $b=-1/\gamma$ and $\hat{c}_A=b\log(|\tilde{c}_A|)$. By applying the monotonic transformation
$$M(z) = \begin{cases} 

-b\log(d - z) & \text{when $0<b<\infty$,} \\ 
-b\log(z - d) & \text{when $b <0$,} \\ 
(z - d) / c & \text{when $b= \infty$,} \end{cases}$$
where $c:=\sum_{A\in\mathcal{A}}\tilde{c}_A$,
we see that $\tilde{f}$ and therefore $f$ is equivalent to a trading function that is either constant sum or a LMSR.
\end{proof}
It is straightforward to see that if the trading function additionally satisfies aversion to permanent loss 
then $b$ is non-negative:
\begin{corollary}\label{cor:trans}
 A trading function for $|\mathcal{A}|>2$ assets
satisfies independence, translation invariance, and aversion to permanent loss if and only if it is equivalent to an LMSR with $b>0$ or a constant sum market maker. 

\end{corollary}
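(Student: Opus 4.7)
The plan is to use Theorem~\ref{secondmain} to reduce the problem to a sign check on the LMSR parameter $b$. First observe that aversion to permanent loss is a statement purely about the family of upper level sets $\{I : f(I) \ge f(J)\}$, which is preserved under monotone transformations of $f$; hence the axiom is well-defined on equivalence classes. By Theorem~\ref{secondmain} it therefore suffices to test each representative constant sum rule $\sum_{A \in \mathcal{A}} c_A I_A$ (with $c_A > 0$ forced by strict monotonicity) and each LMSR $-b \log\bigl(\sum_{A \in \mathcal{A}} e^{(c_A - I_A)/b}\bigr)$ with $b \ne 0$ separately.

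The constant sum case is immediate, since upper level sets of a positively weighted linear functional are closed half-spaces. For the LMSR case, define $G(I) := \sum_{A \in \mathcal{A}} e^{(c_A - I_A)/b}$. As a sum of exponentials of affine functions, $G$ is strictly convex on $\mathbb{R}^{\mathcal{A}}_+$ irrespective of the sign of $b$. When $b > 0$, the outer map $z \mapsto -b\log z$ is strictly decreasing, so $\{I : f(I) \ge k\} = \{I : G(I) \le e^{-k/b}\}$ is a sublevel set of a convex function and thus convex; aversion to permanent loss follows. When $b < 0$, the outer map is strictly increasing, so $\{I : f(I) \ge k\} = \{I : G(I) \ge e^{-k/b}\}$ is a superlevel set of a strictly convex function. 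Choosing two distinct points $I, J$ on the boundary level set $\{G = e^{-k/b}\}$, strict convexity of $G$ yields $G\bigl(\tfrac{I+J}{2}\bigr) < e^{-k/b}$, so the midpoint falls outside $\{f \ge k\}$ and the axiom fails.

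Combined with the routine verification that constant sum rules and LMSRs with $b > 0$ are defined by formulas that manifestly satisfy independence and translation invariance, this pins down the characterization. The only mildly nontrivial step is the non-convexity argument for $b < 0$, and the only subtlety there is ensuring that the level set $\{G = e^{-k/b}\}$ contains at least two distinct points for some admissible $k$. This is unproblematic because $|\mathcal{A}| > 2$ and $G$ is smooth and strictly convex on $\mathbb{R}^{\mathcal{A}}_+$, so every non-minimal level set of $G$ is a smooth hypersurface of positive dimension, from which any two distinct points suffice.
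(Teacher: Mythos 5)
Your proposal is correct and follows essentially the route the paper intends: the paper states the corollary without a formal proof, remarking only that aversion to permanent loss "straightforwardly" forces $b$ to be non-negative, and your argument — reduce via Theorem~\ref{secondmain}, note that the axiom depends only on the upper level sets and is hence invariant under equivalence, then check convexity of $\{G\le e^{-k/b}\}$ versus non-convexity of the superlevel sets of the strictly convex $G$ when $b<0$ — is exactly the omitted verification, carried out correctly.
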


It is also straightforward to see that if translation invariance is replaced by one invariance in the above characterizations the AMM is not only equivalent to an LMSR, but it actually \emph{is} an LMSR.
\begin{corollary}\label{cor:one}
 A trading function for $|\mathcal{A}|>2$ assets
satisfies independence, one invariance, and aversion to permanent loss if and only if it is an LMSR with $b>0$ or a constant sum market maker. 
\end{corollary}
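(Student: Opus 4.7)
The plan is to combine Corollary~\ref{cor:trans} with a short argument that one invariance forces the equivalence class collapse, i.e., the monotone transformation $M$ must itself be an additive shift (or an affine map in the constant sum case).

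First I would verify the easy direction: both LMSRs with $b>0$ and constant sum market makers $f(I)=\sum_{A\in\mathcal{A}} c_A I_A$ with $\sum_A c_A=1$ manifestly satisfy independence and aversion to permanent loss, and one checks directly that $f_{LMSR}(I+\alpha\mathbbm{1})=-b\log(e^{-\alpha/b}\sum_A e^{(c_A-I_A)/b})=f_{LMSR}(I)+\alpha$, and similarly for constant sum when $\sum_A c_A=1$.

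For the converse, I would first note that one invariance implies translation invariance (if $f(I)=f(J)$ then $f(I+\alpha\mathbbm{1})=f(I)+\alpha=f(J)+\alpha=f(J+\alpha\mathbbm{1})$). Hence by Corollary~\ref{cor:trans}, $f=M\circ g$ for some strictly increasing $M$ on the range of $g$, where $g$ is either an LMSR with $b>0$ or a constant sum market maker. The task is to pin down $M$. In the LMSR case, $g$ itself satisfies $g(I+\alpha\mathbbm{1})=g(I)+\alpha$, so one invariance of $f$ becomes
\[
M(g(I)+\alpha)=M(g(I))+\alpha
\]
for all admissible $I$ and $\alpha$. Taking two inventory levels $I,I'$ with $g(I)=y_1$, $g(I')=y_2$ (and choosing representatives with coordinates large enough so that a shift by $\alpha=y_2-y_1$ is admissible), this yields $M(y_2)-M(y_1)=y_2-y_1$ on the range of $g$, so $M(y)=y+c$ for a constant $c$. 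Absorbing $c$ via the identity $-b\log\sum_A e^{(c_A-I_A)/b}+c=-b\log\sum_A e^{((c_A-c)-I_A)/b}$ shows that $f$ is itself an LMSR.

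In the constant sum case, write $g(I)=\sum_A c_A I_A$, so $g(I+\alpha\mathbbm{1})=g(I)+\alpha S$ where $S=\sum_A c_A$. One invariance of $f$ gives $M(y+\alpha S)=M(y)+\alpha$ on the range of $g$. If $S=0$, we would have $M(y)=M(y)+\alpha$ for all $\alpha$, a contradiction; hence $S\neq 0$ and substituting $\beta=\alpha S$ we obtain $M(y+\beta)=M(y)+\beta/S$, so $M$ is affine with slope $1/S$. Then $f(I)=(1/S)\sum_A c_A I_A+d=\sum_A c'_A I_A+d$ with $\sum_A c'_A=1$, which is a constant sum market maker (the additive constant $d$ is allowed since it preserves all the axioms).

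The main obstacle is the bookkeeping in the first case: to deduce $M(y)=y+c$ from the displayed functional equation one must ensure enough $(y,\alpha)$ pairs are admissible, i.e., that for given target values in the range of $g$ there exist inventories $I\in\mathbb{R}_+^{\mathcal{A}}$ realizing them and allowing the required shift $\alpha$ (which may be negative) while keeping $I+\alpha\mathbbm{1}\in\mathbb{R}_+^{\mathcal{A}}$. This is a routine but not entirely trivial verification using the explicit form of $g_{LMSR}$: given any $y$ in its range and any $\alpha\in\mathbb{R}$, one can find $I$ with $g_{LMSR}(I)=y$ and $\min_A I_A\geq -\alpha$ by increasing all coordinates simultaneously (which only shifts $g_{LMSR}$ by a constant), then adjusting a single coordinate to restore the target value of $g_{LMSR}$. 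Everything else is a short calculation.
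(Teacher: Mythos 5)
Your proof is correct and follows essentially the same route as the paper: invoke the translation-invariance characterization to write $f=M\circ g$ with $g$ an LMSR or constant sum rule, then use one invariance to force $M$ to be an additive shift. You are in fact more careful than the paper's two-line argument (which flatly asserts $M(z)=z$), since you correctly track the surviving additive constant, absorb it into the $c_A$'s in the LMSR case, and note the admissibility bookkeeping needed to solve the functional equation for $M$ on the range of $g$.
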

\begin{proof}
By Theorem-\ref{secondmain}, there is a strictly increasing function $M$ such that 
$$f(I)=M(-b\log( \sum_{A \in \mathcal{A}}e^{(c_A-I_A)/b}))$$
for $b\neq \infty$
respectively $$f(I)=M(\sum_{A \in \mathcal{A}} c_A I_A)$$
for $b=\infty$ and coefficients with $\sum_{A}c_A=1$.
 In either case, by one-invariance of $f$, the monotone transformation $M$ is of the form $M(z)=z.$
\end{proof}
\section{Characterizations for two assets}\label{results}
In this section, we provide axiomatizations of different classes of CFMMs for the case of two assets. An important difference to the multi-dimensional case is that independence does not put any restriction on the trading function. On the other hand, separability is still with loss of generality.\footnote{Examples of non-separable and homogeneous CFMMs are the two-dimensional version of the curve family $f_{curve}$ defined above.} 
Therefore, we subsequently explore two directions. First, we study axiomatizations for the general, non-separable case, while otherwise maintaining the axioms from the previous section. This leads to a large class of "concave" CFMMs that can be described by concave increasing bijections on the unit interval. Second, we consider the separable case, where similar results as in the higher dimensional case can be obtained. To obtain separability, we introduce a new axiom "Liquidity Additivity" that is sufficient (and necessary) for the separability of trading functions.
\subsection{Characterization without Seperability}
The next characterization result states that the space of all symmetric, homogeneous, sufficiently funded CFMMs satisfying aversion to permanent loss can be described by concave increasing bijections on the unit interval. The bijection encodes how the AMM should react to imbalances in the liquidity ratio $I_B/I_A$, see Figure~2.
\begin{figure}[t]
    \centering
    \begin{minipage}{0.45\textwidth}
        \centering
        \includegraphics[width=0.9\textwidth]{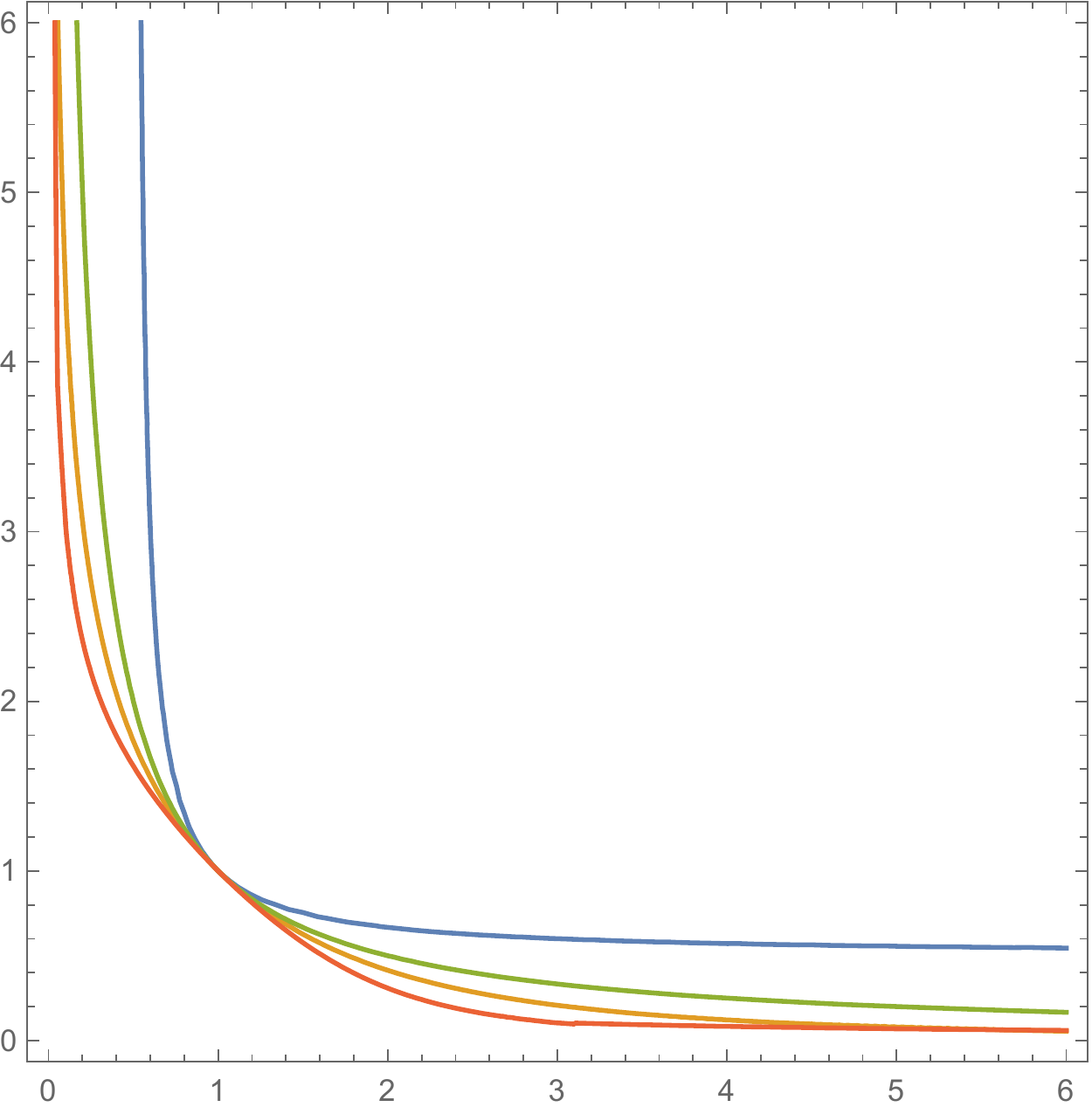} 

    \end{minipage}\hfill
    \begin{minipage}{0.45\textwidth}
        \centering
        \includegraphics[width=0.9\textwidth]{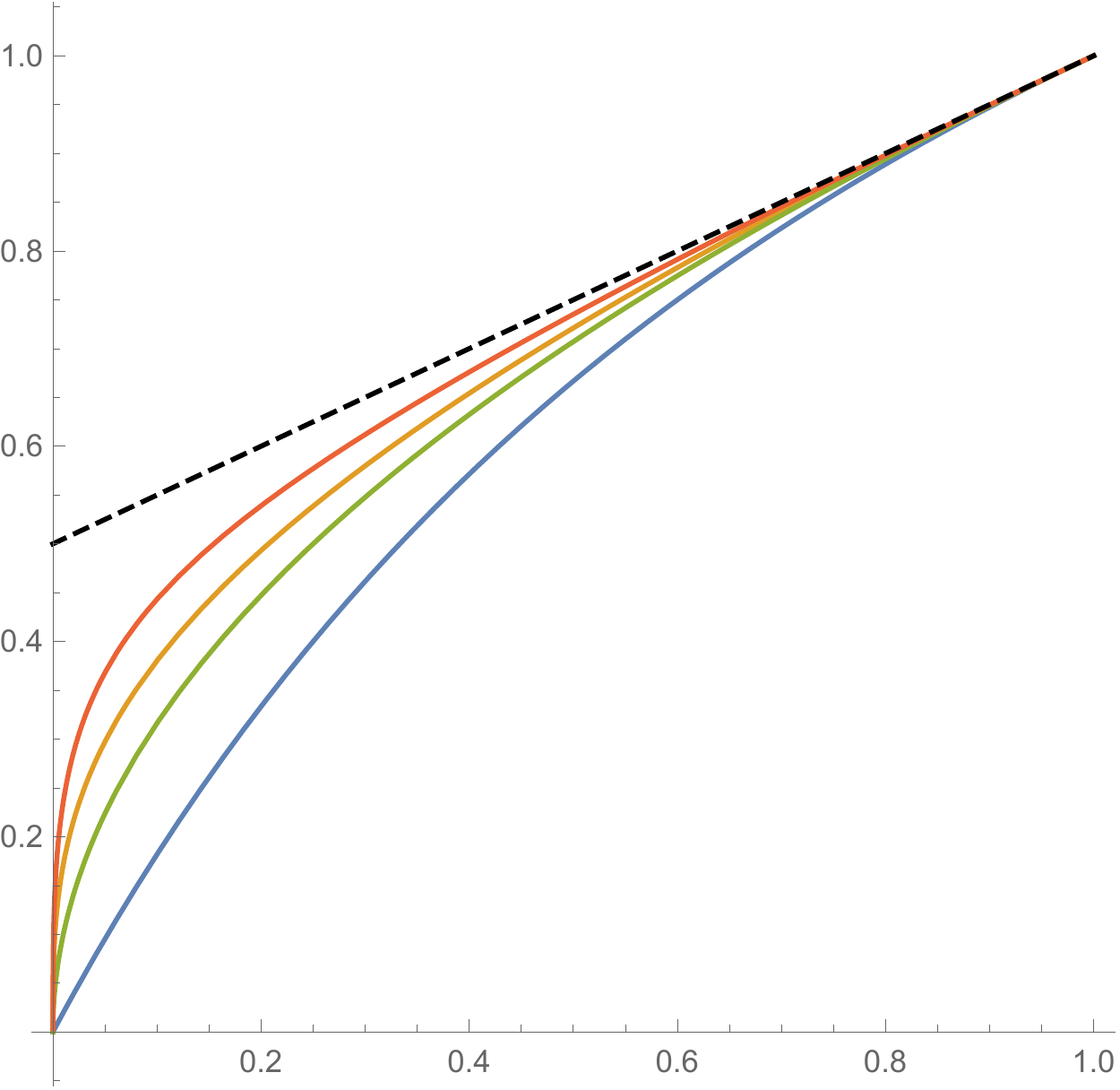} 
    \end{minipage}
      \caption{The left plot displays a liquidity curve and the right plot the function $g$ for a) the two-dimensional curve rule $f_{curve}$ with $\alpha=1$ (red) b) the rule defined by $f(I)=\sqrt[3]{I_A^2I_B+I_AI_B^2}$ (yellow), c) the constant product rule (green) d) the constant elasticity rule with $\gamma=-1$ (blue). The steeper the function in the beginning and the flatter the function in the end the more the exchange rate changes with trading if the inventory is imbalanced and the less it changes with trading if the  inventory is balanced. The dashed black line gives the upper envelope of permissible functions $g$.}
\end{figure}
\begin{theorem}\label{thm:bijection}
A trading function $f$ satisfies  homogeneity in liquidity, symmetry, sufficient funds, and aversion to permanent loss if and only if it is of the form  $$f(I)=\begin{cases}cI_Ag\left(\tfrac{I_B}{I_A}\right),\quad\text{ for }I_A\geq I_B\\
cI_Bg\left(\tfrac{I_A}{I_B}\right),\quad\text{ for }I_B\geq I_A\end{cases}$$ for a constant $c>0$ and an increasing and concave bijection $g:[0,1]\rightarrow[0,1],$ with $\lim_{z\to 1^-}g'(z)=\tfrac{1}{2}$ where the limit is along a sequence of points where $g$ is differentiable.
\end{theorem}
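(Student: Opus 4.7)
The plan is to reduce the two-variable function $f$ to a one-variable function $g$ using homogeneity and symmetry, read off the stated properties of $g$ from the four axioms, and run the construction in reverse.

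For the only-if direction, I would set $c := f(1,1)$ and define $g(t) := f(1,t)/c$ for $t \in [0,1]$. Homogeneity in liquidity gives $f(I_A, I_B) = I_A f(1, I_B/I_A)$ whenever $I_A > 0$, yielding $f(I_A, I_B) = cI_A g(I_B/I_A)$ on $\{I_A \geq I_B\}$, and symmetry produces the matching second branch. The elementary properties of $g$ then follow in turn: $g(1) = 1$ by construction; $g$ is continuous and strictly increasing because $f$ is; and $g(0) = 0$ is enforced by sufficient funds, since if $g(0^+) > 0$ then continuity would give $f(1/g(0^+), 0) = f(1,1)$, putting an axis point on the same level set as an interior point. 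Together with continuity these make $g$ an increasing continuous bijection of $[0,1]$. Concavity of $g$ comes from aversion to permanent loss: convex superlevel sets plus homogeneity of degree one imply $f$ is concave, and in the region $\{I_A \geq I_B\}$ the expression $cI_A g(I_B/I_A)$ is the perspective of $g$, which is concave if and only if $g$ is.

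The derivative condition $\lim_{z \to 1^-} g'(z) = 1/2$ is the most delicate step. I would extend $g$ to $\tilde{g}$ on $[0,\infty)$ by $\tilde{g}(t) := t g(1/t)$ for $t \geq 1$; symmetry of $f$ identifies $\tilde{g}(t) = f(1,t)/c$ on the whole half-line, so concavity of $f$ forces $\tilde{g}$ to be concave on $[0,\infty)$. Differentiating the symmetry identity $\tilde{g}(t) = t\,\tilde{g}(1/t)$ at $t = 1$ along points of differentiability yields $\tilde{g}'(1^-) + \tilde{g}'(1^+) = 1$, while the concavity of $\tilde{g}$ at the join gives $\tilde{g}'(1^-) \geq \tilde{g}'(1^+)$; combining these with the matching constraint coming from the symmetric roles of the two branches of $f$ at the ridge $I_A = I_B$ pins down the common value $1/2$.

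For the if direction, I would take $g$ with the stated properties and define $f$ by the two-piece formula, which is well-defined because $g(1) = 1$ makes both branches agree on the diagonal. Homogeneity and symmetry are visible from the formula, continuity follows by gluing two continuous pieces that agree on the seam, strict monotonicity on the positive orthant reduces to the positivity of $g'(t)$ and $g(t) - t g'(t)$ on $(0,1)$ together with the one-sided matching at the ridge, sufficient funds follows from $g(0) = 0$ since then every positive level set is bounded away from the axes, and convexity of superlevel sets reduces to concavity of the extension $\tilde{g}$, which holds by concavity of $g$ on $[0,1]$ together with the derivative condition $g'(1^-) = 1/2$.

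The main obstacle I expect is the derivative condition. Pure concavity of $\tilde{g}$ combined with the functional equation $\tilde{g}'(1^-) + \tilde{g}'(1^+) = 1$ easily yields $g'(1^-) \geq 1/2$; extracting equality requires the full symmetric structure of $f$ at the ridge, and most of the technical work is isolating the exact value $1/2$ rather than an interval. A secondary technical point is ruling out degenerate behavior such as $g' = 0$ on a subinterval, which would contradict strict monotonicity of $f$ and must be excluded using the bijection property together with $g(0) = 0$ and $g(1) = 1$.
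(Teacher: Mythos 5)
Your construction is essentially the paper's: you define $g(z)=f(1,z)/f(1,1)$, recover the two-branch formula from homogeneity and symmetry, obtain concavity of $f$ (hence of $g$) from the quasi-concavity-plus-homogeneity-plus-monotonicity argument (Friedman's theorem, which the paper also invokes), and get $g(0)=0$ from sufficient funds by the same reasoning. All of that is correct, and your observation that $cI_Ag(I_B/I_A)$ is the perspective of $g$ is a slightly cleaner bridge between concavity of $f$ and concavity of $g$ than the paper's appeal to coordinate-wise concavity. The verification of the ``if'' direction is also fine.

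The gap is exactly where you suspected it: the equality $\lim_{z\to1^-}g'(z)=\tfrac12$. Your computation extracts the full content of symmetry and concavity at the ridge: the extension $\tilde g(t)=t\,g(1/t)$ satisfies $\tilde g'(1^-)+\tilde g'(1^+)=1$ and $\tilde g'(1^-)\ge\tilde g'(1^+)$, hence $g'(1^-)\ge\tfrac12$ --- and nothing more. The appeal to ``the full symmetric structure of $f$ at the ridge'' cannot be made precise, because the four axioms genuinely permit a symmetric kink on the diagonal. Take $\theta\in(\tfrac12,1)$ and $f(I)=\min\{I_A^{1-\theta}I_B^{\theta},\,I_A^{\theta}I_B^{1-\theta}\}$: on $\{I_A\ge I_B\}$ the first term is the smaller one, so this is the two-branch formula with $g(z)=z^{\theta}$; as a minimum of two concave, linearly homogeneous, strictly increasing Cobb--Douglas functions that are swapped by the coordinate flip, it satisfies homogeneity in liquidity, symmetry, sufficient funds and aversion to permanent loss, yet $g'(1^-)=\theta\ne\tfrac12$. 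Its liquidity curves are convex with a corner on the diagonal where the marginal exchange rate jumps from $\theta/(1-\theta)$ down to $(1-\theta)/\theta$. So necessity of the equality fails; only $g'(1^-)\ge\tfrac12$ can be deduced. The paper's own proof has the same soft spot: it asserts that ``by aversion to permanent loss and symmetry, the exchange rate along this sequence needs to approach $1$,'' which is precisely the unproved (and, by the example, unprovable) claim that $\tilde g$ has no kink at $1$. The repairable version of the statement replaces the equality by the inequality $\lim_{z\to1^-}g'(z)\ge\tfrac12$ --- which is also exactly what your ``if''-direction argument needs for concavity at the seam --- or adds differentiability of $f$ on the diagonal as an extra hypothesis.
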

\begin{proof}
It is straightforward to see that an AMM that is represented in this way satisfies the axioms. For the opposite direction,  note that aversion to permanent loss means that $f$ is quasi-concave. By a theorem of~\citet{friedman1973}, monotonicity,  homogeneity, and quasi-concavity of $f$ imply that $f$ is a concave function. Next define $g(z):=f(1,z)/f(1,1)$ and $c:=f(1,1)$. By homogeneity we have $$cI_Ag(\frac{I_B}{I_A})=I_Af(1,\frac{I_B}{I_A})=f(I).$$
Concavity implies coordinate-wise concavity, and therefore concavity of $f$ implies that $g$ is concave. By symmetry, the representation also works for $I_B\geq I_A$.  By sufficient funds, for each $k>0$ the curve $I_B(I_A)$ implicitly defined by $f(I)=k$ satisfies $\lim_{I_A\rightarrow 0^-}I_B(I_A)=\infty$, where we take the limit along a sequence if inventories where the liquidity curve is differentiable. This is possible as convex real-valued functions are almost everywhere differentiable. Therefore we have $\lim_{z\rightarrow0^+}g(z)=\lim_{I_A\rightarrow 0^+}g(I_A/I_B(I_A)))=\lim_{I_A\rightarrow 0^+}\frac{k}{cI_B(I_A)}=0$ where the first limit is along a sequence of inventory ratios where $g$ is differentiable (this is possible as concave real-valued functions are almost everywhere differentiable). By continuity and homogeneity of $f$ we have $f(0,0)=0$. By monotonicity and continuity of $f$, $0=f(0,0)\leq f(1,0)/f(1,1)= g(0)\leq\lim_{z\rightarrow 0^+}g(z)=0$ and therefore $g(0)=0.$ By construction we have $g(1)=1.$ Since $g$ is increasing this implies that $g$ is a bijection of $[0,1]$. For any inventory $I$ where the liquidity curve is differentiable, with $z=I_B/I_A$ we have that
$$p_{A,B}(I)\equiv p_{A,B}(z)=\frac{g(z)}{g'(z)}-z.$$
Consider a sequence of inventory ratios approaching $1$ on which $g$ is differentiable (which exists by the same logic as above). By aversion to permanent loss and symmetry, the exchange rate along this sequence needs to approach $1$. Therefore $$1=\lim_{z\rightarrow1^-}p_{A,B}(z)=\lim_{z\to 1^-}\frac{g(z)}{g'(z)}-z=\frac{1}{\lim_{z\to 1^-}g'(z)}-1\Rightarrow\lim_{z\to 1^-}g'(z)=\tfrac{1}{2}.$$
\end{proof}

With the representation above the CPMM can be obtained through the function $g(z)=\sqrt{z}$, as
$$p_{A,B}(z)=\frac{g(z)}{g'(z)}-z$$
where $z\equiv I_B/I_A$ is the inventory ratio. The expression is linear in $z$ for the square root function.
\subsection{Separability Axioms}
Separability is more natural as an assumption for more than two assets since it is related to independence in that case (see Section~\ref{multiple}). For two assets, independence has no bite. To guarantee separability in the two-dimensional case, instead, we need that liquidity provision is additive in the following sense: imagine that a liquidity provider wants to provide liquidity in either of the two assets.\footnote{As argued above, in the decentralized exchange application, exchanging one token for the numéraire (LP tokens), i.e.~providing liquidity in one token, can always be emulated by first swapping some amount of that token for the other token and afterward providing liquidity for both tokens.} Suppose that starting from an inventory level $I$, the liquidity provider obtains the same amount of numéraire by either providing $x$ units of asset $A$ or by providing $y$ units of asset $B$. Suppose furthermore that starting from an inventory level $(I_A+x,I_B+y)$, the liquidity provider obtains the same amount of numéraire by either providing $\tilde{x}$ units of asset $A$ or by providing $\tilde{y}$ units of asset $B$.  Then starting from $I$, the liquidity provider should obtain the same amount of numéraire by providing $x+\tilde{x}$ units of asset $A$ as by providing $y+\tilde{y}$ units of asset $B$. This leads to the following condition which is often called the Thomsen condition~\cite{krantz1971} in decision theory:\newline

\noindent{\bf Liquidity Additivity}:
For inventory $I$ and quantities $x,y,\tilde{x},\tilde{y}$ such that
$$f(I_A+x,I_B)=f(I_A, I_B+y),\quad f(I_A+x+\tilde{x},I_B+y)=f(I_A+x,I_B+y+\tilde{y})$$
we have
$$f(I_A+x+\tilde{x},I_B)=f( I_A, I_B+y+\tilde{y}).$$\newline

\begin{figure}
    \centering
\includegraphics[width=9cm]{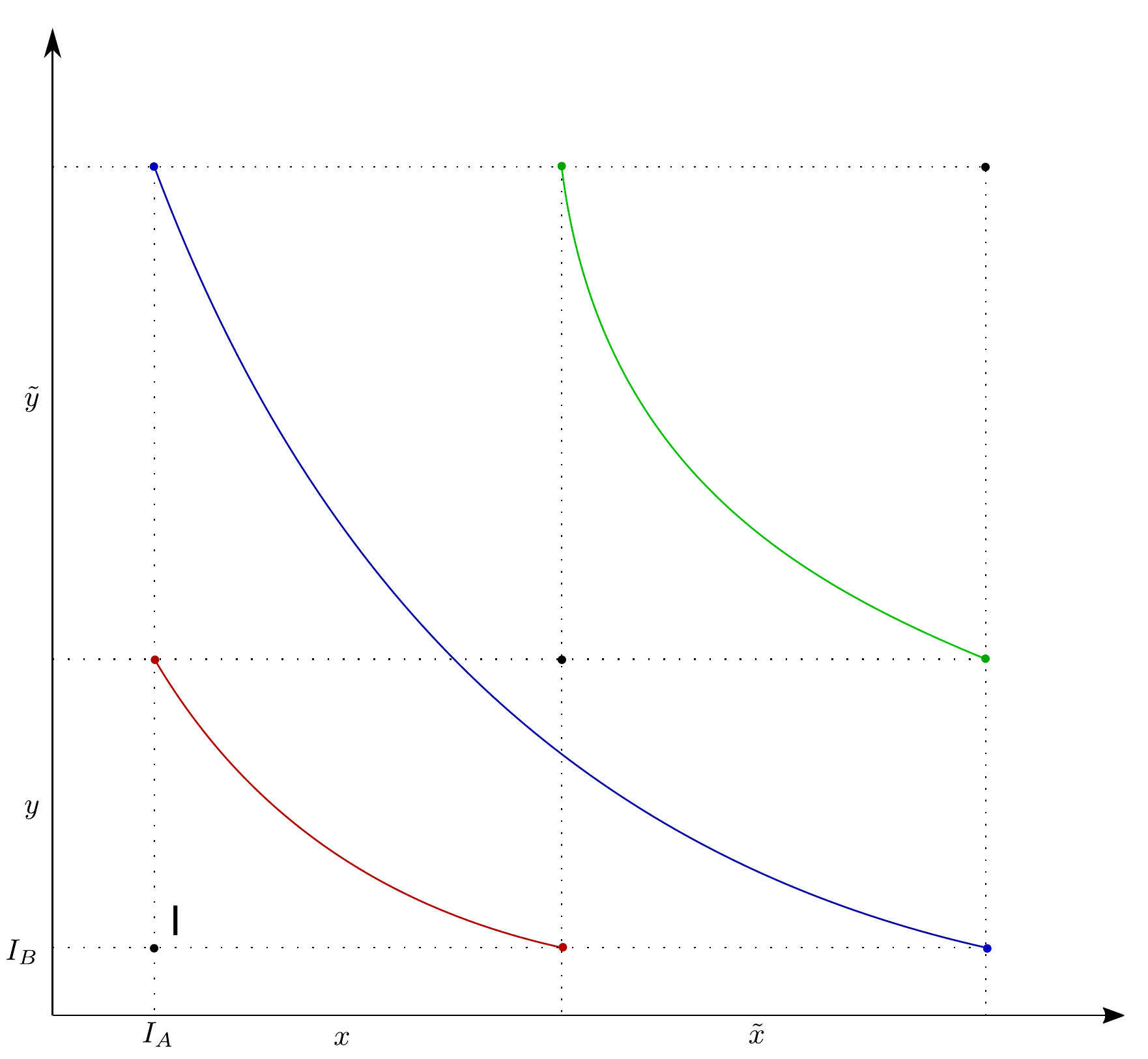}
    \caption{Three liquidity curves from the additivity axiom.}
    \label{fig:my_label}
\end{figure}
\begin{remark}
It is again instructive to consider the cost function version of the liquidity additivity axiom. \newline

\noindent{\bf Cost Seperability (Cost Function version of Liquidity Additivity)}:
For trades $q,r,s\in\mathbb{R}^2$ such that
$$ C(r+q)-C(r)=C(r+s)-C(r)$$
we have
$$C(r_A+q_A,r_B)-C(r)=C(r_A,r_B+s_B)-C(r)\Leftrightarrow C(r_A+s_A,r_B)-C(r)=C( r_A, r_B+q_B)-C(r).$$\newline
Thus, if two trades, $q,s$ cost the same, then the first dimension of one of the two trades cost the same as the second dimension of the other trade if and only if the other two dimensions cost the same.
\end{remark}
Replacing the independence axiom with the Liquidity Additivity axiom we obtain the following characterization for the two-dimensional case:

\begin{theorem}\label{elasticity}
A trading function satisfies Liquidity Additivity and scale invariance if and only if it is equivalent to a CEMM. 
\end{theorem}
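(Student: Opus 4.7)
The plan is to reduce the two-asset case to essentially the same argument used in the proof of Theorem~\ref{main_result}, with Liquidity Additivity taking over the role that Independence played in higher dimensions. The forward direction is immediate: CEMMs are separable, scale invariant, and separability implies Liquidity Additivity (if $\phi_A(I_A+x)+\phi_B(I_B)=\phi_A(I_A)+\phi_B(I_B+y)$ and an analogous equation at the shifted point, then adding and cancelling gives the required identity). So I focus on the converse.

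The first step is to obtain separability. Liquidity Additivity is the classical Thomsen condition from conjoint measurement theory: together with monotonicity and continuity, and with the solvability that follows from strict monotonicity and continuity of $f$ on $\mathbb{R}_+^2$, Krantz et al.~(1971, Sections 6.2 and 6.11) guarantees the existence of continuous increasing one-dimensional functions $\phi_A,\phi_B$ and a strictly increasing $M$ such that
\[
  f(I_A,I_B)=M\bigl(\phi_A(I_A)+\phi_B(I_B)\bigr).
\]
Hence $f$ is equivalent to $\tilde{f}(I_A,I_B)=\phi_A(I_A)+\phi_B(I_B)$. This is the only place where the two-dimensional argument differs in a genuine way from the multi-asset argument; the Thomsen condition is precisely the hypothesis that plugs the gap created by the triviality of Independence when $|\mathcal{A}|=2$.

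Once separability is in hand, the remainder of the proof is essentially a verbatim repetition of the argument in Theorem~\ref{main_result}. Fix $J\in\mathbb{R}_+^{\{A,B\}}$ with $J>0$ and, for $t>0$, define $\alpha(t),\beta(t)\in(0,t)$ by
\[
  \tilde{f}(e^{\alpha(t)}J)=\tilde{f}(e^{t}J_A,J_B),\qquad \tilde{f}(e^{\beta(t)}J)=\tilde{f}(J_A,e^{t}J_B).
\]
Scale invariance of $f$ (equivalent to scale invariance of $\tilde{f}$) lets us translate these identities along the ray $s\mapsto e^{s}J$, and separability of $\tilde{f}$ then yields, for $\phi(s):=\tilde{f}(e^{s}J)$, the functional equation
\[
  \phi(s+\alpha(t))-\phi(s)=\phi(s+t)-\phi(s+\beta(t)),
\]
which is exactly equation~\eqref{eq:difference}. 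Theorem~\ref{main_result}'s smoothing-by-convolution argument then forces $\phi$ to be of the form $ce^{\gamma s}+d$ or $cs+d$; separability propagates this form back to $\phi_A$ and $\phi_B$ via
\[
  \phi_A(e^{s+t}J_A)-\phi_A(e^{s}J_A)=\phi(s+\alpha(t))-\phi(s),
\]
giving $\phi_A(I_A)=c_AI_A^\gamma+d_A$ or $c_A\log I_A+d_A$, and symmetrically for $\phi_B$. A final monotone transformation $M$ (the same one as in Theorem~\ref{main_result}) turns $\tilde{f}$ into the CEMM form.

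I expect the main obstacle to be verifying the hypotheses needed to invoke the Thomsen-condition representation theorem cleanly in our setting: one must check that the strict monotonicity and continuity of $f$ on $\mathbb{R}_+^2$, combined with Liquidity Additivity, deliver both the Archimedean and the restricted solvability conditions of Krantz et al. This is routine but is the genuinely new ingredient compared to the multi-asset proof. The analytic core, namely extracting the functional equation and solving it, is already done in Theorem~\ref{main_result} and can be cited rather than redone.
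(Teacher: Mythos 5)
Your proposal is correct and follows essentially the same route as the paper: invoke the Thomsen-condition representation theorem of Krantz et al.\ (Section 6.2) to obtain the separable form $\tilde{f}(I)=\phi_A(I_A)+\phi_B(I_B)$, and then rerun the functional-equation argument of Theorem~\ref{main_result} under scale invariance. The paper's proof is just a terser version of the same plan, so no further comparison is needed.
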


\begin{proof}
One readily checks that CEMMs satisfy both axioms.

Monotonicity, continuity, and Liquidity Additivity imply by a classical result (see~\citet{krantz1971} Section 6.2) that $f$ is equivalent to a trading function $\tilde{f}$ for which there are one-dimensional, increasing functions $\phi_A,\phi_B$ such that $$\tilde{f}(I)=\phi_A(I_A)+\phi_B(I_B).$$
The construction of the additive representation proceeds by constructing a self-mapping of the plane that maps liquidity curves into lines with a negative slope, and lines parallel to the axes into lines parallel to the axes. The remainder of the proof follows the same logic as the proof of Theorem~1.
\end{proof}
We also obtain corresponding corollaries to Corollaries~\ref{cor:aversion} and~\ref{cor:homogen}. The proofs are completely analogous to the multi-dimensional case with independence replaced by Liquidity Additivity.
\begin{corollary}\label{cor:3}
A trading function for $|\mathcal{A}|=2$ assets
\begin{enumerate}
\item satisfies  Liquidity Additivity, scale invariance, and aversion to permanent loss if and only if it is equivalent to a CEMM with positive inventory elasticity.
\item satisfies Liquidity Additivity, scale invariance, and sufficient funds if and only if it is equivalent to an AMM with positive constant inventory elasticity $\tfrac{1}{1-\gamma}\leq1$.
\end{enumerate}
\end{corollary}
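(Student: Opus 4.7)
The plan is to reduce both parts to a case analysis on the parameter $\gamma$ of a CEMM, invoking Theorem~\ref{elasticity} as a black box. Under Liquidity Additivity and scale invariance alone, that theorem tells us that $f$ is equivalent to a CEMM with some parameter $\gamma \in \mathbb{R}$, so the only remaining work is to determine which values of $\gamma$ are compatible with the third axiom imposed in each part. The argument is strictly parallel to the proof of Corollary~\ref{cor:aversion} in the higher-dimensional setting, with Theorem~\ref{elasticity} playing the role of Theorem~\ref{main_result}.

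For part 1, I would check when the liquidity curves of a CEMM are convex, equivalently when the upper level sets of $f$ are convex. For $\gamma\in(0,1]$ the map $I_A\mapsto I_A^\gamma$ is concave on $\mathbb{R}_+$, so $\sum_{A\in\mathcal{A}}\alpha_A I_A^\gamma$ is concave and its upper level sets (which coincide with those of $f$) are convex. For $\gamma=0$ the geometric mean $\prod_A I_A^{\alpha_A}$ is concave and the same conclusion holds. For $\gamma<0$ the sum $\sum_A\alpha_A I_A^\gamma$ is convex while $f$ is obtained from it by the decreasing transformation $z\mapsto z^{1/\gamma}$, so the upper level sets of $f$ are the lower level sets of a convex function and are therefore convex. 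For $\gamma>1$, $I_A^\gamma$ is strictly convex and a direct check shows that the upper level sets of $\sum_A\alpha_A I_A^\gamma$ (now with $f$ an increasing transformation since $1/\gamma>0$) fail to be convex. Combining these cases, aversion to permanent loss holds if and only if $\gamma\leq 1$, that is if and only if the inventory elasticity $1/(1-\gamma)$ is positive in the paper's convention.

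For part 2, sufficient funds requires that the liquidity curves, viewed as subsets of the closed positive orthant, do not meet the axes. For $\gamma>0$ the equation $\alpha_A I_A^\gamma+\alpha_B I_B^\gamma=k$ has finite positive solutions when either coordinate is set to $0$, so the curves touch the axes and the axiom fails. For $\gamma=0$ the curve $\prod_A I_A^{\alpha_A}=k$ is asymptotic to but never meets the axes, and for $\gamma<0$ the term $I_A^\gamma$ blows up as $I_A\to 0^+$, again forcing the curves to stay in the open orthant. Hence sufficient funds is equivalent to $\gamma\leq 0$, that is to $0<1/(1-\gamma)\leq 1$. The main point is that there is no genuine technical obstacle beyond this case analysis: once Theorem~\ref{elasticity} is in hand, both statements reduce to elementary convexity and boundary-behaviour arguments on the CEMM family, exactly as in Corollary~\ref{cor:aversion}.
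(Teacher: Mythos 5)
Your proposal is correct and follows essentially the same route as the paper: invoke Theorem~\ref{elasticity} to reduce to the CEMM family, then determine the admissible range of $\gamma$ exactly as in Corollary~\ref{cor:aversion} (liquidity curves are convex iff $\gamma\leq 1$; they avoid the axes iff $\gamma\leq 0$). The paper's own proof is just the remark that the argument is ``completely analogous to the multi-dimensional case with independence replaced by Liquidity Additivity,'' so your explicit case analysis matches, and indeed fills in, what the paper leaves implicit.
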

\begin{corollary}\label{cor:5}
A trading function for $|\mathcal{A}|=2$ assets
\begin{enumerate}
\item satisfies  Liquidity Additivity, homogeneity in liquidity, and aversion to permanent loss if and only if it is a CEMM with positive inventory elasticity.
\item satisfies Liquidity Additivity, homogeneity in liquidity, and sufficient funds if and only if it is a CEMM with elasticity $0<\tfrac{1}{1-\gamma}\leq1$.
\end{enumerate}
\end{corollary}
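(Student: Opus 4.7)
The plan is to reduce directly to Corollary~\ref{cor:homogen}'s argument, using Corollary~\ref{cor:3} as the two-dimensional substitute for Theorem~\ref{main_result}. The forward direction (CEMMs satisfy all three axioms) is immediate from the definitions, so the content is in the other direction.

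First, I would apply Corollary~\ref{cor:3}: under Liquidity Additivity, homogeneity in liquidity (which implies scale invariance), and aversion to permanent loss, $f$ is equivalent to a CEMM with positive inventory elasticity; similarly, under Liquidity Additivity, homogeneity in liquidity, and sufficient funds, $f$ is equivalent to a CEMM with elasticity $0 < 1/(1-\gamma) \leq 1$. So in either case there exist $\gamma$, coefficients $\alpha_A > 0$ with $\sum_A \alpha_A = 1$, and a strictly increasing $M$ on the appropriate range such that
$$f(I) = M\!\left(\sum_{A \in \mathcal{A}} \alpha_A I_A^\gamma\right) \quad (\gamma \ne 0), \qquad f(I) = M\!\left(\prod_{A \in \mathcal{A}} I_A^{\alpha_A}\right) \quad (\gamma = 0).$$

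Next, I would pin down $M$ using homogeneity in liquidity, exactly as in the proof of Corollary~\ref{cor:homogen}. Writing $h(I)$ for the inner expression, one checks $h(\lambda I) = \lambda^\gamma h(I)$ when $\gamma \neq 0$ (because each $I_A^\gamma$ scales by $\lambda^\gamma$) and $h(\lambda I) = \lambda h(I)$ when $\gamma = 0$ (since $\sum_A \alpha_A = 1$). The identity $f(\lambda I) = \lambda f(I)$ then forces $M(\lambda^\gamma z) = \lambda M(z)$ for all $\lambda > 0$ and all $z$ in the range of $h$, respectively $M(\lambda z) = \lambda M(z)$ when $\gamma = 0$. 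Fixing a reference point $z_0$ and letting $\lambda$ vary over $(0,\infty)$ sweeps $\lambda^\gamma z_0$ over all positive reals, and one concludes $M(z) = c z^{1/\gamma}$ for $\gamma \neq 0$ and $M(z) = c z$ for $\gamma = 0$, where $c = M(z_0)/z_0^{1/\gamma}$ (respectively $c = M(z_0)/z_0$) is a strictly positive constant determined by monotonicity of $M$.

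Substituting back yields
$$f(I) = c \left(\sum_{A \in \mathcal{A}} \alpha_A I_A^\gamma\right)^{1/\gamma} \text{ or } f(I) = c \prod_{A \in \mathcal{A}} I_A^{\alpha_A},$$
which is exactly a CEMM in the original (not just ``equivalent to'' a CEMM) sense, with the elasticity constraints inherited from Corollary~\ref{cor:3}. There is no serious obstacle here: the only subtlety is making sure the range of $h$ is a full positive halfline so that the scaling argument determines $M$ on its entire domain, which follows from strict monotonicity of $h$ and the fact that $\lambda$ ranges over all of $(0,\infty)$. Everything else is a direct transcription of the argument in Corollary~\ref{cor:homogen} to the two-asset setting, with Liquidity Additivity replacing independence.
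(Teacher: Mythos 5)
Your proposal is correct and matches the paper's intended argument: the paper states that the proof of Corollary~\ref{cor:5} is completely analogous to that of Corollary~\ref{cor:homogen} with independence replaced by Liquidity Additivity, which is exactly your reduction (apply Corollary~\ref{cor:3} to get equivalence to a CEMM, then use $f(\lambda I)=\lambda f(I)$ to force $M(z)=cz^{1/\gamma}$, resp.\ $M(z)=cz$). Your extra care about the range of the inner function and the surjectivity of the scaling action is a harmless elaboration of the same step.
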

We also get an analogous result to Theorem~\ref{opt:product}. The proof is also analogous to the multi-dimensional case.
\begin{theorem}\label{opt:product2}
The constant product rule $f_{product}$ is trader optimal within the class of scale invariant, LP additive, symmetric, and sufficiently funded AMMs, that is for each scale invariant, LP additive, symmetric, and sufficiently funded $g$ we have $f_{product}\succsim g$ where $f_{product}\sim g$ if and only if $g$ equivalent to $f_{product}$.
\end{theorem}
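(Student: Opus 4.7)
The plan is to follow exactly the same strategy as the proof of Theorem~\ref{opt:product}, swapping out the multidimensional characterization for its two-dimensional analogue. Given $g$ satisfying scale invariance, Liquidity Additivity, symmetry, and sufficient funds, Corollary~\ref{cor:3} (part~2) tells us that $g$ is equivalent to a CEMM with inventory elasticity $1/(1-\gamma)\in(0,1]$, i.e.\ with some parameter $\gamma\leq 0$. The symmetry axiom then forces the two coefficients in the CEMM representation to coincide, so we may take $\alpha_A=\alpha_B=1/2$. Since the relation $\succsim$ depends only on the family of liquidity curves, and equivalent trading functions induce the same family, we may assume without loss of generality that $g$ itself is the symmetric CEMM with parameter $\gamma\le 0$.

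Next I would fix a balanced inventory $J$ with $J_A=J_B=\bar I$ and compare the liquidity curves through $J$. For $f_{\text{product}}$ the curve is $I_B(I_A)=\bar I^{\,2}/I_A$, while for the symmetric CEMM with parameter $\gamma<0$ the curve is $I_B(I_A)=(2\bar I^{\,\gamma}-I_A^{\gamma})^{1/\gamma}$. The heart of the argument is then the single algebraic inequality
$$\bigl(2\bar I^{\,\gamma}-I_A^{\gamma}\bigr)^{1/\gamma}\;\geq\;\frac{\bar I^{\,2}}{I_A} \qquad\text{for all } I_A>\bar I \text{ and } \gamma<0.$$
Raising both sides to the power $\gamma<0$ (which reverses the inequality) and rearranging reduces this to $I_A^{\gamma}+\bar I^{\,2\gamma}I_A^{-\gamma}\geq 2\bar I^{\,\gamma}$, which is AM--GM applied to the positive numbers $I_A^{\gamma}$ and $\bar I^{\,2\gamma}I_A^{-\gamma}$ with equality only when they coincide, i.e.\ when $I_A=\bar I$.

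Therefore, after the trader spends $x=I_A-\bar I>0$ units of asset $A$, the residual inventory of $B$ is (weakly) larger under $g$ than under $f_{\text{product}}$, i.e.\ the trader receives weakly more units of $B$ under the product rule, establishing $f_{\text{product}}\succsim g$. For the equivalence clause: when $\gamma=0$ the CEMM is (equivalent to) $f_{\text{product}}$ and the inequality above is an identity; when $\gamma<0$ the AM--GM step is strict for every $I_A>\bar I$, so $f_{\text{product}}\succ g$ strictly. The only genuine computational point is the inequality, which is routine; all remaining work is just invoking Corollary~\ref{cor:3} and using the fact that $\succsim$ factors through equivalence classes, so there is no real new obstacle beyond what was already handled in the multidimensional case.
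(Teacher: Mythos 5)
Your proposal is correct and follows exactly the route the paper intends: the paper states that the proof of Theorem~\ref{opt:product2} is analogous to that of Theorem~\ref{opt:product}, and you carry out precisely that adaptation, invoking Corollary~\ref{cor:3} in place of the multi-dimensional characterization and comparing the liquidity curves through a balanced inventory. Your explicit AM--GM verification of the inequality $(2\bar I^{\gamma}-I_A^{\gamma})^{1/\gamma}\geq \bar I^{2}/I_A$ (with equality only at $I_A=\bar I$) fills in a step the paper merely asserts, so there is nothing to add.
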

Similarly, the result under translation invariance carries over to the two-dimensional case with independence replaced by Liquidity Additivity:
\begin{theorem}\label{elasticity2}
A trading function for $|\mathcal{A}|=2$ assets satisfies Liquidity Additivity and translation invariance if and only if it is equivalent to an LMSR market maker or a constant sum market maker. 
\end{theorem}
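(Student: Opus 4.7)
The plan is to mirror the two-dimensional scale-invariant argument (Theorem~\ref{elasticity}), but substitute the translation-invariance machinery from Theorem~\ref{secondmain} in place of the scale-invariance machinery from Theorem~\ref{main_result}. The easy direction is a direct computation: one verifies that $f_{LMSR}(I) = -b\log(\sum_A e^{(c_A - I_A)/b})$ and $f_{\text{mean}}(I)=\sum_A c_A I_A$ satisfy both Liquidity Additivity (since both are, after a monotone transformation, additively separable in $I_A$ and $I_B$, and additively separable trading functions automatically satisfy the Thomsen condition) and translation invariance (a direct check on the formulas).

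For the converse, I would first invoke the classical result (\citet{krantz1971}, Section 6.2) exactly as in the proof of Theorem~\ref{elasticity}: monotonicity, continuity and Liquidity Additivity imply that $f$ is equivalent to an additively separable trading function $\tilde f(I) = \phi_A(I_A) + \phi_B(I_B)$ for increasing continuous one-dimensional functions $\phi_A, \phi_B$. Note that this separability step is where the Thomsen condition does all the work that independence did in the multi-dimensional case; once we have a separable representation, the rest of the argument is dimension-free.

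Next, I would repeat verbatim the translation-invariance argument from Theorem~\ref{secondmain} in two dimensions. Fix an arbitrary base inventory $J$. For $t>0$, define $\alpha(t)$ and $\beta(t)$ implicitly by
\[
\tilde f(J + \alpha(t)\mathbbm{1}) = \tilde f(J_A + t,\, J_B), \qquad \tilde f(J + \beta(t)\mathbbm{1}) = \tilde f(J_A,\, J_B + t),
\]
so that $0<\alpha(t),\beta(t)<t$ by monotonicity and continuity. Translation invariance applied with shift $s\mathbbm{1}$ together with additive separability gives, after rearrangement, the same functional equation as in Theorem~\ref{secondmain}: setting $\phi(t) = \tilde f(J + t\mathbbm{1})$, one obtains $\phi(s+\alpha(t)) - \phi(s) = \phi(s+t) - \phi(s+\beta(t))$. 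This is identical to equation~\eqref{eq:difference}, so the smoothing and differential-equation argument carried out there applies unchanged: there exists $\gamma\in\mathbb R$ such that $\phi(s) = c e^{\gamma s} + d$ if $\gamma\neq 0$ and $\phi(s) = cs + d$ if $\gamma=0$.

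Finally, I would transfer this information back to $\phi_A$ and $\phi_B$ via the separability identity $\phi_A(J_A + s + t) - \phi_A(J_A + s) = \phi(s + \alpha(t)) - \phi(s)$, dividing by $t$ and passing to the limit, exactly as in Theorem~\ref{secondmain}. This shows $\phi_A$ is an exponential (or linear) function of $I_A$ with the same parameter $\gamma$; by symmetry of the role of $A$ and $B$ the same holds for $\phi_B$. Thus $\tilde f$ is, up to an additive constant, of the form $-\sum_A e^{(\hat c_A - I_A)/b}$ (or $\sum_A e^{(\hat c_A - I_A)/b}$ when $b<0$) for $b = -1/\gamma$, or $\sum_A \tilde c_A I_A$ when $\gamma=0$. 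Applying the same monotone transformation $M$ as in Theorem~\ref{secondmain} gives equivalence to an LMSR or constant sum market maker. I expect the main obstacle to be purely a bookkeeping one, namely verifying that the Krantz additive representation is genuinely available under only monotonicity and continuity (without any smoothness), and then checking that the approximation by convolution from Theorem~\ref{main_result}/\ref{secondmain} still produces a $\phi$ satisfying the same functional equation; beyond that, the rest is mechanical.
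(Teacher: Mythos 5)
Your proposal is correct and follows essentially the same route as the paper: invoke the additive representation theorem of \citet{krantz1971} (Section 6.2) via the Thomsen condition to obtain $\tilde f(I)=\phi_A(I_A)+\phi_B(I_B)$, and then rerun the translation-invariance functional-equation argument of Theorem~\ref{secondmain} unchanged. The paper's own proof is just a two-line pointer to exactly these two steps, so no further comparison is needed.
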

\begin{proof}
One readily checks that LMSRs satisfy both axioms.

Monotonicity, continuity, and Liquidity Additivity imply by a classical result (see~\citet{krantz1971} Section 6.2) that $f$ is equivalent to a trading function $\tilde{f}$ for which there are one-dimensional, increasing functions $\phi_A,\phi_B$ such that $$\tilde{f}(I)=\phi_A(I_A)+\phi_B(I_B).$$
The construction of the additive representation proceeds by constructing a self-mapping of the plane that maps liquidity curves into lines with a negative slope, and lines parallel to the axes into lines parallel to the axes. The remainder of the proof follows the same logic as the proof of Theorem~\ref{secondmain}.
\end{proof}
We also obtain corresponding corollaries to Corollaries~\ref{cor:trans} and~\ref{cor:one}. The proofs are completely analogous to the multi-dimensional case with independence replaced by cost seperability.
\begin{corollary}\label{cor:4}
 A trading function for $|\mathcal{A}|=2$ assets
satisfies Liquidity Additivity, translation invariance, and aversion to permanent loss if and only if it is equivalent to an LMSR with $b>0$ or a constant sum market maker. 

\end{corollary}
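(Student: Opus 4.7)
The plan is to derive Corollary~\ref{cor:4} as an immediate consequence of Theorem~\ref{elasticity2} together with a sign analysis of the liquidity parameter $b$, in exact parallel to how Corollary~\ref{cor:trans} was obtained from Theorem~\ref{secondmain} in the multi-asset case. The reverse direction (sufficiency) is routine: one verifies that the LMSR with $b>0$ and the constant sum rule are both translation invariant, Liquidity Additive (direct computation on the additive/exponential representation), and have convex liquidity curves, so they satisfy aversion to permanent loss.

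For the forward direction, I would first invoke Theorem~\ref{elasticity2} to conclude that $f$ is equivalent to either a constant sum market maker or an LMSR with some parameter $b\neq 0$. The constant sum case needs no further work. So the only remaining task is to show that aversion to permanent loss forces $b>0$, i.e.\ to rule out the case $b<0$.

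To rule out $b<0$, I would look at the liquidity curve $\{I:f(I)=f(J)\}$ induced by the LMSR form $\tilde f(I)=d\pm\sum_{A}e^{(\hat c_A-I_A)/b}$ from the proof of Theorem~\ref{secondmain}. Solving $e^{(c_A-I_A)/b}+e^{(c_B-I_B)/b}=K$ for $I_B$ as a function of $I_A$ gives $I_B=c_B-b\log(K-e^{(c_A-I_A)/b})$, and a direct differentiation shows that the second derivative $I_B''(I_A)$ has the sign of $1/b$. Thus for $b<0$ the liquidity curves are strictly concave, so the upper level set $\{I:f(I)\geq f(J)\}$ fails to be convex, contradicting aversion to permanent loss. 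For $b>0$ the curves are convex, consistent with the axiom. Hence only $b>0$ (and the constant sum limit $b=\infty$) survives.

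The only mildly delicate point is making sure the sign analysis is done on a trading function that is actually equivalent to $f$; but since aversion to permanent loss is a property of the level sets $\{I:f(I)\geq f(J)\}$, it is preserved under strictly increasing transformations $M$, so it transfers without change from $f$ to the canonical LMSR representative $\tilde f$ produced by Theorem~\ref{elasticity2}. There is therefore no substantive obstacle beyond the short convexity check, and the proof can be stated in a couple of lines once Theorem~\ref{elasticity2} is in hand.
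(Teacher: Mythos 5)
Your proposal is correct and follows exactly the route the paper intends: the paper gives no explicit proof of Corollary~\ref{cor:4}, stating only that it is "completely analogous" to Corollary~\ref{cor:trans}, i.e.\ apply Theorem~\ref{elasticity2} and then observe that aversion to permanent loss forces $b>0$. Your convexity computation (the liquidity curve $I_B(I_A)$ has second derivative of the sign of $1/b$, so the upper level sets are convex iff $b>0$) and your remark that aversion to permanent loss is preserved under equivalence are precisely the details the paper leaves implicit.
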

\begin{corollary}\label{cor:7}
 A trading function for $|\mathcal{A}|=2$ assets
satisfies Liquidity Additivity, one invariance, and aversion to permanent loss if and only if it is an LMSR with $b>0$ or a constant sum market maker. 
\end{corollary}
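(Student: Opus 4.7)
The plan is to reduce the statement to Corollary~\ref{cor:4} by observing that one invariance is strictly stronger than translation invariance, and then to use one invariance to pin down the monotonic transformation that Corollary~\ref{cor:4} leaves undetermined. This mirrors exactly how Corollary~\ref{cor:one} was derived from Theorem~\ref{secondmain} in the multi-dimensional case, only now invoking the two-dimensional Liquidity-Additivity-based result.

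First I would note that if $f(I+\alpha\mathbbm{1})=f(I)+\alpha$ for every $\alpha\in\mathbb{R}$, then $f(I)=f(J)$ trivially implies $f(I+\alpha\mathbbm{1})=f(I)+\alpha=f(J)+\alpha=f(J+\alpha\mathbbm{1})$, so $f$ satisfies translation invariance. Since $f$ also satisfies Liquidity Additivity and aversion to permanent loss by hypothesis, Corollary~\ref{cor:4} yields a strictly increasing function $M$ on the appropriate range such that either
$$f(I)=M\!\left(-b\log\!\Big(\textstyle\sum_{A\in\mathcal{A}}e^{(c_A-I_A)/b}\Big)\right)\quad\text{with }b>0,$$
or
$$f(I)=M\!\left(\textstyle\sum_{A\in\mathcal{A}}c_A I_A\right)\quad\text{with }c_A>0.$$

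Next I would pin down $M$ in each case by exploiting one invariance. In the LMSR case, the inner function $g(I):=-b\log(\sum_A e^{(c_A-I_A)/b})$ already satisfies $g(I+\alpha\mathbbm{1})=g(I)+\alpha$, so one invariance of $f=M\circ g$ forces $M(z+\alpha)=M(z)+\alpha$ for every $z$ in the range of $g$ and every $\alpha\in\mathbb{R}$; hence $M(z)=z+d$ for some real constant $d$. This additive constant is absorbed into the LMSR by the substitution $c_A\mapsto c_A-d$, since $-b\log(\sum_A e^{(c_A-d-I_A)/b})=-b\log(\sum_A e^{(c_A-I_A)/b})+d$. In the constant sum case, $g(I):=\sum_A c_A I_A$ satisfies $g(I+\alpha\mathbbm{1})=g(I)+\alpha\sigma$ with $\sigma:=\sum_A c_A>0$, so one invariance forces $M(z+\alpha\sigma)=M(z)+\alpha$ for every $z$ and every $\alpha$; setting $\beta:=\alpha\sigma$ gives $M(z+\beta)=M(z)+\beta/\sigma$, so $M$ is affine of slope $1/\sigma$. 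After renormalizing coefficients by $\tilde{c}_A:=c_A/\sigma$, $f$ becomes a constant sum market maker. The converse direction is routine: both LMSRs with $b>0$ and constant sum market makers with $\sum_A c_A=1$ manifestly satisfy one invariance, Liquidity Additivity, and aversion to permanent loss.

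The main obstacle is essentially administrative — handling the additive or multiplicative constants introduced by $M$ so that the final form lies literally (not just up to equivalence) in the LMSR or constant sum families. No new analytic content beyond Corollary~\ref{cor:4} is required.
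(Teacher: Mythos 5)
Your proposal is correct and follows essentially the same route as the paper, which derives this corollary by the exact analogue of Corollary~\ref{cor:one}: apply the translation-invariance characterization (here Corollary~\ref{cor:4}, since one invariance implies translation invariance) and then use one invariance to pin down the monotone transformation $M$. If anything, your handling of the residual affine constant is slightly more careful than the paper's terse assertion that $M(z)=z$, since $M(z)=z+d$ is also consistent with one invariance and the constant must be absorbed into the coefficients $c_A$, exactly as you do.
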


\section*{Acknowledgements}

{We thank Maxim Bichuch, Federico Echenique, Zach Feinstein, Tim Roughgarden, Ludvig Sinander and participants of Tokenomics 2022 for valuable comments.}

\bibliographystyle{ACM-Reference-Format}
\bibliography{sample}
\newpage
\begin{appendix}

\section{Logical Independence of Axioms}\label{LogicalIndependence}
We now show that the axioms in the different characterizations are all necessary to obtain the results.

For the logical independence of axioms in Theorem~\ref{main_result} we can use the following examples: First, the trading function from Example~\ref{ex:non-independent} is not independent but satisfies scale invariance. Note that this trading function is not a CEMM. The example can also be easily generalized to more than three assets. Second, consider the following example: $$f(I) := \sum_{A\in \mathcal{A}} I_A + \sqrt{I_A}.$$ The trading function is independent but violates scale invariance and is not a CEMM. 

For the logical independence of axioms in Corollary~\ref{cor:aversion} note that the above examples all satisfy aversion to permanent loss. The example $$f(I) = I_AI_BI_C(I_AI_B + I_BI_C + I_AI_C)$$ satisfies scale invariance and sufficient funds but violates independence and is not a CEMM. 
The sum of two CEMMs with parameters $\gamma_1<0$ and $\gamma_2<0$ such that $\gamma_1\neq\gamma_2$ is independent and sufficiently funded but violates scale invariance.
Finally, a trading function of constant but negative inventory elasticity, i.e.~with $\gamma>1$ is independent, satisfies scale invariance, but does not satisfy aversion to permanent loss, whereas e.g. the constant sum market maker is independent, satisfies scale invariance but violates sufficient funds.

For the logical independence of axioms in Corollary~\ref{cor:homogen}, we can use the same examples as above, up to ''rescaling" - by which we mean monotonically transforming the trading function from Example~\ref{ex:non-independent} into a homogeneous one by applying the square root function.

For the logical independence of axioms in Theorem~\ref{thm:bijection} we can use the following examples: The un-scaled product rule $f(I)=I_AI_B$ does not satisfy homogeneity but satisfies all other axioms.  The function does not have a representation as in the statement of the theorem. The function $f(I)=I_A^{\theta}I_B^{1-\theta}$, where $0<\theta<\frac{1}{2}$ satisfies all axioms except for symmetry and does not have a representation as in the statement of the theorem. The function $f(I)=I_A+I_B$ satisfies all axioms but sufficient funds. The function does not have a representation as in the statement of the theorem. The function$$f(I)=\begin{cases}\sqrt{I_A^2+I_B^2}, \quad &I_B/2\leq I_A\leq2I_B\\\sqrt{\tfrac{5}{2}I_AI_B}\quad &\text{else.}\end{cases}$$satisfies all axioms but aversion to permanent loss. The function does not have a representation as in the statement of the theorem.

For the logical independence of axioms in Theorem~\ref{elasticity} we can use the following examples: The trading function  $$f(I)=(I_AI_B^2+I_A^2I_B)^{1/3}$$ satisfies scale invariance, but not LP-additivity. The function does not have positive constant inventory elasticity. For Corollary~\ref{cor:3} note that the above examples also satisfy aversion to permanent loss and sufficient funds. The logical independence of the other axioms in the corollaries can be established by two-dimensional analogs of the examples that established logical independence for the multidimensional case. For Corollary~\ref{cor:5} note furthermore that the above example is also homogeneous. 

For the logical independence of axioms in Theorem~\ref{secondmain} and Corollary~\ref{cor:trans}, the trading function

\begin{equation*}
    f(I) = \frac{1}{n}\left(\sum_{A\in \mathcal{A}}I_A - \tfrac{1}{n-1}\sum_{A\neq B}\sqrt{1+(I_A-I_B)^2}\right). 
\end{equation*}
 satisfies translation invariance and aversion to impermanent loss but violates independence.  

CPMMs satisfy aversion to impermanent loss and independence, but violate translation invariance. The LMSR with $b<0$ satisfies translation invariance and independence but violates aversion to impermanent loss. 

All of the examples above that satisfy translation invariance also satisfy one invariance, therefore, we get the independence of axioms in the corollary~\ref{cor:one}. 

The examples described also work for the logical independence of axioms for 2-dimensional case, by simply choosing $n=2$ and noting that the relevant examples violate or satisfy Liquidity Additivity.

\section{Discontinuous Liquidity Curves}\label{Discontinuous}
Continuity of $f$ is generally necessary to obtain our results. However, Theorem~\ref{thm:bijection}, Corollaries~\ref{cor:homogen} and~\ref{cor:5}, can alternatively be obtained under the assumption that $f$ is continuous on the boundary, i.e.~$\lim_{I_A\to0}f(I)=f(0,I_{-A})$ for each $A\in\mathcal{A}$, since continuity on the boundary, aversion to permanent loss and homogeneity imply that $f$ is concave (see~\citet{friedman1973}) and therefore continuous.
The main characterization results fail to hold without continuity. In two dimensions the following example shows this: Choose $\phi_A$ and $\phi_B$ to be increasing functions that take on values in null sets $A\subset\mathbb{R}_{++}$ and $B\subset \mathbb{R}_{++}$ that have the property that
$$\alpha,\alpha'\in A,\beta,\beta'\in B: \alpha+\beta=\alpha'+\beta'\Rightarrow\alpha=\alpha'\text{ and }\beta=\beta'.$$
Such sets are easy to construct using decimal expansions: To define $\phi_A(I_A)$ consider a binary expansion of $I_A$ and let $\phi_A(I_A)$ be the real number defined by the binary expansion interpreted as decimal expansion (so, for example, we can choose $\phi_A(1)=1,$ $\phi_A(1.5)=1.1$ and $\phi_A(2)=10$, etc.). Define $\phi_B(I_B)=2\phi_A(I_A)$. 
Let $f(I)=\phi_A(I_A)+\phi_B(I_B)$. By construction, $f(I)=f(J)$ if and only if $I=J$. Thus, properties such as scale invariance and sufficient funds are trivially satisfied. The trading function also satisfies Liquidity Additivity.
For the case of more than two assets a similar trading function can be constructed that satisfies independence, and trivially, scale invariance and the sufficient funds' condition.
\section{Duality and Portfolio Value Functions}\label{dual}
There is a dual description of trading functions satisfying the aversion to permanent loss in terms of portfolio value functions, see \citet{angeris2021}. Next, we show that we can alternatively reformulate Corollary~\ref{cor:homogen} in a dual way by imposing an independence property on the portfolio value function.

A {\bf portfolio value function} induced by the trading function $f$ at a level $k>0$ is defined for each $p\in\mathbb{R}_+^{\mathcal{A}}$ by
$$V_k(p):=\inf\{p\cdot I:f(I)\geq k\}.$$
The interpretation is as follows: if an arbitrageur has access to an infinitely liquid external market where prices measured in some numéraire are currently given by $p$, then starting from any inventory level $J$ with $f(J)=k$, the arbitrageur can maximally extract $p\cdot J-V_k(p)$ units of the numéraire good. Equivalently, the value of the inventory of the AMM measured in the numéraire on the external market and after interaction with the arbitrageur is $V_k(p)$.

Similarly to the independence axiom for trading functions, we can formulate an independence axiom for portfolio value functions that states that the value of a portfolio is separable in the different assets.\newline

\noindent{\bf Independence for portfolio value functions}:
For each subset of assets $\mathcal{B}\subseteq\mathcal{A}$ and price vectors  $p,q$ we have
$$V_k(p_{\mathcal{B}},q_{-\mathcal{B}})= V_k(q)\Leftrightarrow V_k(p)= V_k(q_{\mathcal{B}},p_{-\mathcal{B}}).$$

We can reformulate the content of Corollary~\ref{cor:homogen} in a dual way, by characterizing AMMs with positive constant inventory elasticity by independence.
The result is a straightforward consequence of Corollary~\ref{cor:homogen}: 
\begin{corollary}[Dual version of Corollary~\ref{cor:homogen}, Part 1]\label{cor:portfolio}
Let $f$ be a trading function satisfying the aversion to permanent loss so that it can be equivalently described by a family of portfolio value function $\{V_k\}_{k\geq0}$. The following statements are equivalent:
\begin{enumerate}

\item The portfolio value functions $\{V_k\}_{k>0}$ are re-scaled versions of each other, $V_k(p)=kV_1(p)$ for each $k>0$, strictly increasing in prices and independent.
\item The trading function $f$ is of constant inventory elasticity with parameter $\gamma<1$.
\item The portfolio value functions $\{V_k\}_{k>0}$ are of the form
$$V_k(p)=c\left(\sum_{A\in\mathcal{A}}\alpha_Ap_A^{\tfrac{\gamma}{\gamma-1}}\right)^{\tfrac{\gamma-1}{\gamma}}k,$$

when $0\neq \gamma< 1$ resp.
$$V_k(p)=c\left(\prod_{A\in\mathcal{A}}p_A^{\alpha_A}\right)k,$$
when $\gamma=0$
for $c>0$ and $\alpha_A>0$ with $\sum_{A\in\mathcal{A}}\alpha_A=1$.
\end{enumerate}
\end{corollary}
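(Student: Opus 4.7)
The plan is to prove the equivalence by the cyclic chain $(2) \Rightarrow (3) \Rightarrow (1) \Rightarrow (2)$, combining a direct Lagrangian computation of the conjugate CES form with an application of Corollary~\ref{cor:homogen} to the value function $V_1$ itself. Throughout, aversion to permanent loss forces each upper level set $L_k := \{I : f(I) \geq k\}$ to be closed and convex, so $f$ can be recovered from the family $\{V_k\}_k$ via the bipolar relation $f(I) = \sup\{k : p \cdot I \geq V_k(p) \text{ for all } p > 0\}$; when $V_k = kV_1$, this collapses to $f(I) = \inf_{p > 0}(p \cdot I)/V_1(p)$, so $f$ and $V_1$ are linked by a H\"older-type duality.

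For $(2) \Rightarrow (3)$, I would carry out the direct Lagrangian minimization of $p \cdot I$ subject to the CEMM constraint $\sum_A \alpha_A I_A^\gamma = (k/c)^\gamma$ (or the Cobb--Douglas analogue when $\gamma = 0$). The first-order conditions give $I_A \propto (p_A/\alpha_A)^{1/(\gamma-1)}$; substituting back and using homogeneity to factor $k$ cleanly produces the announced form with conjugate exponent $\delta := \gamma/(\gamma-1)$. The reverse direction $(3) \Rightarrow (1)$ is by inspection: linearity in $k$ is explicit; strict monotonicity in each $p_A$ follows from a sign check on $\delta$ and $1/\delta$ in the two regimes $\gamma < 0$ and $0 < \gamma < 1$; and independence is immediate because $V_k$ is a strictly monotonic function of the manifestly separable expression $\sum_A \alpha_A p_A^\delta$.

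The main implication is $(1) \Rightarrow (2)$. The key observation is that $V_1$ itself satisfies all hypotheses of Corollary~\ref{cor:homogen}, Part~1 when regarded as a function on the price space $\mathbb{R}_+^{\mathcal{A}}$. Indeed, as an infimum of linear functionals in $p$, $V_1$ is concave, so its upper level sets are convex (aversion to permanent loss for $V_1$); it is positively homogeneous of degree~$1$ in $p$ (homogeneity in liquidity), since $V_k(\lambda p) = \lambda V_k(p)$ by linearity of $p \cdot I$; strict monotonicity and independence are assumed directly in (1); and continuity on the open positive orthant follows from concavity. Corollary~\ref{cor:homogen} therefore forces $V_1$ to be a CEMM with parameter $\delta \leq 1$. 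A second Lagrangian calculation, essentially the reverse of the one in the previous paragraph, then inverts the bipolar to express $f$ as a CEMM with parameter $\gamma = \delta/(\delta-1) \leq 1$, as required.

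The main obstacle lies in justifying the duality machinery in $(1) \Rightarrow (2)$: specifically, that the bipolar recovery of $f$ from $V_1$ is exact, and that one may restrict to strictly positive prices throughout to avoid boundary pathologies when invoking Corollary~\ref{cor:homogen}. Both issues are handled by aversion to permanent loss together with continuity of $f$, which ensure that each $L_k$ equals the intersection over $p > 0$ of its supporting half-spaces $\{I : p \cdot I \geq V_k(p)\}$.
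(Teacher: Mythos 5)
Your proposal is correct and follows essentially the same route as the paper: both hinge on observing that $V_1$ is homogeneous of degree one and concave by construction, so the independence and strict-monotonicity hypotheses let you apply the Theorem~\ref{main_result}/Corollary~\ref{cor:homogen} machinery to $V_1$ itself, with the link between (2) and (3) supplied by the standard CES conjugacy computation. The only point worth making explicit is the exclusion of the boundary exponent $\delta=1$ (linear $V_1$), which the paper rules out by noting that no strictly increasing trading function induces a linear portfolio value function.
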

\begin{proof}First note that, by definition, the portfolio value function is homogeneous in prices $V_k(\lambda p)=\lambda V_k(p)$. By the same mathematical argument as in the proof of Theorem~\ref{main_result} and Corollary~\ref{cor:homogen}, independence, strict monotonicity in prices, and homogeneity of $V_1$ imply that it is of the form
$$V_1(p)=\begin{cases}c\left(\sum_{A\in\mathcal{A}}p_A^{\rho}\right)^{1/\rho},\quad &\rho\neq 0,\\
c\prod_{A\in\mathcal{A}}p_A^{\alpha_A},\quad &\rho=0,
\end{cases}$$
for a parameter $\rho\in\mathbb{R}$. By definition, portfolio value functions are concave in prices and therefore $\rho\leq 1$. Since the trading function $f$ is  strictly increasing, we have strict inequality $\rho< 1$. Define $\gamma=\tfrac{\rho}{\rho-1}$. Then we get that $V_1$ is of the form in (3).
If the portfolio functions are a re-scaled version of each other, we also get that all $V_k$ are of the form in (3). The opposite direction that (3) implies (1) can be verified immediately. The equivalence between (2) and (3) follows from straightforward calculations.
\end{proof}
\begin{remark}
Note that the constant mean AMMs (the case $\gamma=1$) are excluded in the above characterization. Since exchange rates are constant for the mean, the portfolio value function is not strictly increasing in prices in that case. Furthermore, portfolio value functions that are linear in prices are also excluded because there is no strictly increasing trading function that induces this value function. In both cases, the dual function of the linear function would be given by a weighted minimum. 
\end{remark}
\end{appendix}

\end{document}